\newcommand{\betahw}{\hat\beta_W}
\newcommand{\betahols}{\hat\beta_{OLS}}
\newcommand{\sigmah}{\hat\sigma^2}
\newcommand{\tran}{\mathsf{T}}
\newcommand{\simiid}{\stackrel{\mathrm{iid}}{\sim}}
\newcommand{\dnorm}{\mathcal{N}}
\newcommand{\e}{\mathbb{E}}
\newcommand{\var}{\mathrm{Var}}
\newcommand{\diag}{\mathrm{diag}}
\newtheorem{model}{Model}
\newtheorem{lemma}{Lemma}
\newtheorem{theorem}{Theorem}
\title{Confidence Intervals for Algorithmic Leveraging in Linear Regression}
\author{Katelyn Gao\\
Stanford University \footnote{Now at Intel Labs}}
\date{}
\begin{document}

\maketitle

\begin{abstract}
The age of big data has produced data sets that are computationally expensive to analyze and store. Algorithmic leveraging proposes that we sample observations from the original data set to generate a representative data set and then perform analysis on the representative data set. In this paper, we present efficient algorithms for constructing finite sample confidence intervals for each algorithmic leveraging estimated regression coefficient, with asymptotic coverage guarantees. In simulations, we confirm empirically that the confidence intervals have the desired coverage probabilities, while bootstrap confidence intervals may not. 
\end{abstract}

\section{Introduction}\label{ref:intro}

% Introduce the big data problem and algorithmic leveraging

A popular method to deal with modern massive data sets with large volume is to sample a representative data set that is much smaller than the original data set and then carry out analysis on the representative data set. The sampling of the representative data set is usually dependent on the analysis method employed. As long as the sampling process is not too computationally expensive, if the size of the sample is much smaller than the size of the original data set, we could have considerable savings on the total computation time.

Algorithmic leveraging refers to the special case where the probability an observation is sampled is positively correlated with the influence of each observation on the results of data analysis. The definition of influence varies based on the data analysis problem considered.  

% Discuss previous work, emphasizing MMY15
We are concerned with algorithmic leveraging for the problem of least squares approximation (linear regression) \cite{DMM06}, but algorithmic leveraging has been applied to many other data analysis problems with large data sets. They include low-rank matrix approximation \cite{MD09} and least absolute deviations regression \cite{CDMMMW13}.

Previous literature has primarily been concerned with estimation, providing high probability bounds on errors. However, in many practical regression settings, we also would like to perform uncertainty quantification. Therefore, in this paper we explore how to efficiently construct confidence intervals and tests of significance for the estimated coefficients. We utilize the framework of Ma et al. \cite{MMY15}, which computes the expectation and variance of algorithmic leveraging estimates for linear regression.

% Discuss data model goals of our paper at a high level
Assume that the data come from the following model.
\begin{model}\label{eq:refmodel}
For $i=1,...,N$
\[y_i=x_i^\tran\beta+\epsilon_i, \quad\epsilon_i\simiid\dnorm(0,\sigma^2)\]
where $x_i\in\mathbb{R}^p$ are the $p$ predictors and $y_i \in \mathbb{R}$ are the responses. 
\end{model}
The statistically efficient estimator of the regression coefficient vector $\beta$ is the ordinary least squares (OLS) estimator $\betahols$, which requires $O(Np^2)$ time to compute (see Section~\ref{sec:ols}). When both $N$ and $p$ are large, that would be expensive. This is why a data reduction method like algorithmic leveraging is desirable. 

In section~\ref{sec:background}, we introduce ordinary least squares and algorithmic leveraging for linear regression, and present some useful lemmas. Section~\ref{sec:inference} shows how to construct confidence intervals and tests of significance using those estimated coefficients in $o(Np^2)$ time. The confidence intervals are exact when $\sigma^2$ is known and asymptotically exact when $\sigma^2$ is not known.
In section~\ref{sec:experiments}, using simulated data, we confirm that our proposed confidence intervals have at least the nominal coverage probability and that our tests of significance control the type 1 error rate and have low type 2 error rates. In contrast, the bootstrap, a popular approach to obtaining confidence intervals for complex estimators, is more computationally expensive and may lead to confidence intervals with too small coverage probabilities.
Section~\ref{sec:conclusion} concludes and discusses possible future work.

\section{Background and Preliminary Work}\label{sec:background}

In this section, we provide some background on least squares estimation of linear regression models and describe algorithmic leveraging as applied to that problem. We end with a characterization of the distribution of the resulting regression coefficient estimates.

Let $v_j$ indicate the $j$th entry of a vector $v$, and $(M)_{ij}$ indicate entry $(i,j)$ of a matrix $M$.

\subsection{Ordinary Least Squares and Statistical Leverage Scores}\label{sec:ols}

Suppose that we have data from Model~\eqref{eq:refmodel}. Let $X$ be the $N\times p$ matrix with rows $x_i^\tran$ and $Y$ be the $N$-dimensional vector of $y_i$'s. The OLS problem is
\begin{align*}
\arg\min_{\beta\in\mathbb{R}^p}\|Y-X\beta\|_2^2
\end{align*}
and its solution is 
\begin{align}\label{eq:betaols}
\betahols &= (X^\tran X)^{-1}X^\tran Y
\end{align}
with
\begin{align*}
\var(\betahols) &= \sigma^2(X^\tran X)^{-1}.
\end{align*}

By the Gauss-Markov Theorem \cite{W05}, $\betahols$ is the minimum variance unbiased estimator of $\beta$. By using the singular value decomposition (SVD) of $X$, $\betahols$ and, when $\sigma^2$ is known, $\var(\betahols)$ can be computed in $O(Np^2)$ time \cite{GV96}. However, when both $N$ and $p$ are large, doing so is expensive.

Let the OLS fitted values be
\begin{align}\label{eq:hatmatrix}
\hat Y &= X(X^\tran X)^{-1}X^\tran Y = H Y
\end{align}
where $H=X(X^\tran X)^{-1}X^\tran$ is called the hat matrix. 

The statistical leverage score of the $i$th observation is defined to be the $i$th diagonal entry of $H$, $h_{ii}=x_i^\tran(X^\tran X)^{-1}x_i$. The statistical leverage scores may be directly computed from the SVD of $X$, and thus requires $O(Np^2)$ computation time.
$h_{ii}$ is considered to be the influence of the $i$th observation on the regression results \cite{CH86}. For linear regression, algorithmic leveraging samples observations so that those with higher statistical leverage scores are more likely to be sampled and solves a least squares problem on the sample.

\subsection{Algorithmic Leveraging in Linear Regression}\label{sec:alglev}

This section summarizes the traditional algorithmic leveraging procedure for linear regression. Given data from Model~\eqref{eq:refmodel} and a distribution $\pi$ over the observations, we execute Algorithm~\ref{alg:alglev}.

\begin{algorithm}[ht]
	\caption{Algorithmic Leveraging \cite{MMY15}}
	\label{alg:alglev}
	\begin{algorithmic}[1]
	\STATE Sample with replacement $r$ observations according to the probabilities $\pi_i$ for $i=1,...,N$.
	\STATE Scale each sampled observation by $1/\sqrt{r\pi_i}$. 
	\STATE Solve the OLS problem on the scaled sampled observations, and return the estimate of $\beta$.
	\end{algorithmic}
\end{algorithm}
Note that in theory $r$ could be any natural number, but in practice we would choose $r<N$.

Step $1$ of Algorithm \ref{alg:alglev} has computational complexity $O(r)$, and step $2$ $O(rp)$. Step $3$, as in section \ref{sec:ols}, requires $O(rp^2)$ computation time. Therefore, as long as the computation time to obtain the distribution $\pi$ is $o(Np^2)$, the total computation time of Algorithm~\ref{alg:alglev} is $o(Np^2)$. 

The ideal distribution $\pi_i \propto h_{ii}$; by doing so, we are more likely to sample observations that are influential. Because computing the statistical leverage scores requires $O(Np^2)$ time, in practice, approximations such as those of Clarkson et al. \cite{CDMMMW13} and Drineas et al. \cite{DMMW12} are used; they are computable in $O(Np\log p)$ time and $O(Np\log N)$ time, respectively.

However, for our theoretical results we do not assume that $\pi$ was computed using either of the above algorithms. 
We only assume that $\pi$ is not dependent on the responses and that $\pi$ does not have any entries equal to zero. 

Let $S_X$ be the $N \times r$ matrix where entry $(j,i)$ is $1$ if the $i$th sample is the $j$th observation in the original data set and $0$ otherwise. The sampled data can be written as $(S_X^\tran X,S_X^\tran Y)$. Let $D$ be the $r\times r$ diagonal matrix with $i$th diagonal element $1/\sqrt{r\pi_j}$ if the $i$th sample is the $j$th observation of the original data set.

Then, step $3$ of Algorithm \ref{alg:alglev} can be written as
\begin{align}\label{eq:wls}
\arg\min_{\beta\in\mathbb{R}^p}\|DS_X^\tran Y-DS_X^\tran X\beta\|_2^2
\end{align}
with solution
\begin{align}\label{eq:betaw}
\begin{split}
\betahw &= (X^\tran W X)^{-1} X^\tran W Y \\
W &= S_X D^2 S_X^\tran.
\end{split}
\end{align}
For convenience, we call the result of Algorithm~\ref{alg:alglev}, $\betahw$, the algorithmic leveraging estimate.

\subsection{Asymptotic Normality of the Algorithmic Leveraging Estimates}\label{sec:asymp}

In this section, we study the distribution of $\betahw$. We first note that $W$, as defined in $\eqref{eq:betaw}$, is a $N \times N$ diagonal matrix. Therefore, we can write $W=\diag(\mathbf{w})$, where $\mathbf{w}=(w_1,...,w_N)^\tran$. 

\begin{lemma}\label{lemma:wasconv}
For given $N$, as $r \to \infty$, $\mathbf{w}$ converges almost surely to $\mathbf{1}_N$.
\end{lemma}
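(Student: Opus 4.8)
The plan is to compute each coordinate of $\mathbf{w}$ explicitly as a sample average and then apply the strong law of large numbers coordinatewise. First I would unwind the definition $W = S_X D^2 S_X^\tran$. Each column of $S_X$ is an indicator vector picking out a single observation, and $D^2$ is diagonal with entry $1/(r\pi_j)$ in the position corresponding to a draw of observation $j$. A direct computation of the $(j,k)$ entry of $S_X D^2 S_X^\tran$ shows that the off-diagonal entries vanish (confirming that $W$ is diagonal, as already noted) and that
\[
w_j = \frac{m_j}{r\pi_j},
\]
where $m_j$ is the number of the $r$ draws that selected observation $j$.

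Next I would exploit the sampling-with-replacement structure of Step~1 of Algorithm~\ref{alg:alglev}. Writing $m_j = \sum_{i=1}^r Z_{ij}$ with $Z_{ij}$ the indicator that the $i$th draw selects observation $j$, the $Z_{ij}$ are, for fixed $j$, independent and identically distributed $\mathrm{Bernoulli}(\pi_j)$ across $i=1,\dots,r$. By the strong law of large numbers, $\frac{1}{r}\sum_{i=1}^r Z_{ij} \as \pi_j$ as $r\to\infty$. Since $N$ is fixed, $\pi_j$ does not change with $r$, and since $\pi$ has no zero entries we may divide by the strictly positive constant $\pi_j$ without affecting the convergence, giving $w_j = m_j/(r\pi_j) \as 1$.

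Finally I would pass from coordinatewise convergence to convergence of the vector. Because $N$ is finite, the event that all $N$ coordinates converge to $1$ is a finite intersection of probability-one events and hence itself has probability one; therefore $\mathbf{w} \as \mathbf{1}_N$. I expect the only non-routine step to be the first one: carefully verifying from the definitions of $S_X$ and $D$ that $w_j$ equals exactly $m_j/(r\pi_j)$, and not some other count-based expression. Once that identification is in hand, the remainder is a textbook application of the strong law together with the harmlessness of the fixed positive rescaling $1/\pi_j$ and the finiteness of $N$.
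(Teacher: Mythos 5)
Your proposal is correct and follows essentially the same route as the paper: both identify $w_j = \frac{1}{r\pi_j}\sum_{i=1}^r (S_X)_{j,i}$ as a scaled sample average of Bernoulli$(\pi_j)$ indicators and conclude by the strong law of large numbers. The only cosmetic difference is that the paper applies the SLLN to the vector average of iid random vectors with mean $\mathbf{1}_N$ directly, whereas you argue coordinatewise and then intersect the $N$ probability-one events, which amounts to the same thing.
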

\begin{proof}
See Section \ref{sec:proofs:wasconv}. 
\end{proof}

Using the multivariate Lindeberg-Feller central limit theorem \cite{greene} and results from Ma et al. \cite{MMY15}, we obtain
\begin{theorem}\label{thm:betaclt}
Suppose that as $N\to\infty$, $X^\tran X/N$ converges to a finite positive definite matrix and $\lim_{N\to\infty}(\sum_{i=1}^N x_i x_i^\tran)^{-1}x_i x_i^\tran = 0$. Then as $r,N\to\infty$, $\betahw$ is approximately distributed as
\begin{align*}
\dnorm(\beta,\sigma^2(X^\tran X)^{-1}+\dfrac{\sigma^2}{r}(X^\tran X)^{-1}X^\tran \diag\left(\dfrac{(1-h_{ii})^2}{\pi_i}\right)X(X^\tran X)^{-1}).
\end{align*}
\end{theorem}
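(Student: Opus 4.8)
The plan is to separate the two independent sources of randomness in $\betahw$: the Gaussian regression errors $\epsilon$, and the random diagonal weight vector $\mathbf w$ generated by Step~1 of Algorithm~\ref{alg:alglev}, which is independent of $\epsilon$ because $\pi$ does not depend on the responses. Writing $\betahw - \beta = (\betahols - \beta) + (\betahw - \betahols)$, I would argue that the first summand produces the $\sigma^2(X^\tran X)^{-1}$ term of the stated covariance, the second produces the $O(1/r)$ sampling correction, and the two are asymptotically independent so that their covariances add.

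For the OLS summand, Model~\eqref{eq:refmodel} gives $\betahols - \beta = (X^\tran X)^{-1}X^\tran\epsilon$, which is \emph{exactly} $\dnorm(0,\sigma^2(X^\tran X)^{-1})$ since $\epsilon$ is Gaussian; no asymptotics enter here. For the sampling summand I would regard $\betahw$ as a function $f(\mathbf w) = (X^\tran\diag(\mathbf w)X)^{-1}X^\tran\diag(\mathbf w)Y$ and expand about $\mathbf w=\mathbf 1_N$, where $f(\mathbf 1_N)=\betahols$. Differentiating gives $\partial f/\partial w_i|_{\mathbf 1_N} = (X^\tran X)^{-1}x_i\hat e_i$ with $\hat e_i = y_i - x_i^\tran\betahols$ the $i$th OLS residual, so the linearization is
\begin{align*}
\betahw - \betahols \approx (X^\tran X)^{-1}X^\tran\diag(\hat e)(\mathbf w - \mathbf 1_N).
\end{align*}
Lemma~\ref{lemma:wasconv} ($\mathbf w\to\mathbf 1_N$ almost surely as $r\to\infty$) is what justifies expanding about $\mathbf 1_N$ and treating the quadratic remainder as asymptotically negligible.

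Next I would establish asymptotic normality of this linear term. Conditioning on the data, $\mathbf w - \mathbf 1_N$ is a centered sum of $r$ independent and identically distributed single-draw contributions, so the multivariate Lindeberg--Feller central limit theorem \cite{greene} yields conditional normality as $r\to\infty$; the hypothesis $\lim_{N\to\infty}(\sum_i x_ix_i^\tran)^{-1}x_ix_i^\tran = 0$ is precisely the uniform-smallness condition needed to verify the Lindeberg condition while $N$ grows. Using $\e[\mathbf w]=\mathbf 1_N$ and the multinomial covariance of $\mathbf w$, together with $X^\tran(I-H)=0$ to annihilate the off-diagonal $\mathbf 1_N\mathbf 1_N^\tran$ and residual cross terms, the conditional covariance of $\betahw$ reduces, following the computation of Ma et al.\ \cite{MMY15}, to $\tfrac1r(X^\tran X)^{-1}X^\tran\diag(\hat e_i^2/\pi_i)X(X^\tran X)^{-1}$. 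Taking the expectation over $\epsilon$, using $\hat e=(I-H)\epsilon$ and the idempotence of $I-H$ to evaluate the residual second moments, then delivers the second covariance term of the theorem.

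Finally I would assemble the pieces. Because $\betahols-\beta=(X^\tran X)^{-1}X^\tran\epsilon$ and $\hat e=(I-H)\epsilon$ are jointly Gaussian with cross-covariance $\sigma^2(X^\tran X)^{-1}X^\tran(I-H)=0$, they are independent; since $\mathbf w$ is also independent of $\epsilon$, the cross-covariance between the OLS and sampling summands vanishes, and a Cram\'er--Wold argument combines the two asymptotically independent Gaussian pieces into a single $\dnorm$ with mean $\beta$ and covariance equal to the sum of the two matrices. I expect the main obstacle to be the joint $r,N\to\infty$ limit: showing that the Taylor remainder is negligible uniformly enough to survive both limits, and that the conditional (in $r$) central limit theorem can be upgraded to an unconditional statement as $N$ grows, which is where positive-definiteness of $X^\tran X/N$ and the Lindeberg hypothesis do the real work. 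By comparison, the covariance bookkeeping, including the cancellations through $X^\tran(I-H)=0$, is routine.
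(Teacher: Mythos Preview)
Your approach is genuinely different from the paper's and, in some respects, more explicit. The paper does not split $\betahw-\beta$ into an OLS piece and a sampling piece; instead it writes $\betahw=\beta+(X^\tran WX/N)^{-1}X^\tran W\epsilon/N$, uses Slutsky together with Lemma~\ref{lemma:wasconv} to replace $(X^\tran WX/N)^{-1}$ by $(X^\tran X/N)^{-1}$ and $X^\tran W\epsilon/N$ by $X^\tran\epsilon/N$ as $r\to\infty$, and then applies the multivariate Lindeberg--Feller CLT directly to the $N$-indexed sum $N^{-1}\sum_i x_i\epsilon_i$ to obtain asymptotic normality. The stated mean and covariance are then simply quoted from the approximate moment formulas of Ma et al.~\cite{MMY15} rather than derived. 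Your decomposition $(\betahols-\beta)+(\betahw-\betahols)$, the Taylor expansion of $f(\mathbf w)$ about $\mathbf 1_N$, and the orthogonality $X^\tran(I-H)=0$ reproduce those moment formulas from scratch and give a clean reason why the two covariance pieces add. What you gain is a self-contained derivation of the variance and of the asymptotic independence of the two summands; what the paper gains is brevity by outsourcing that calculation to \cite{MMY15}.

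One point deserves tightening. You attribute the hypothesis $\lim_{N\to\infty}(\sum_i x_ix_i^\tran)^{-1}x_ix_i^\tran=0$ to the Lindeberg condition for the $r$-sum underlying $\mathbf w-\mathbf 1_N$. But conditionally on the data that sum is i.i.d.\ in $k=1,\ldots,r$, so the classical CLT suffices and the hypothesis plays no role in your sampling CLT. In the paper the hypothesis is invoked for the \emph{$N$-sum} $\sum_i x_i\epsilon_i$, which is a sum of independent but not identically distributed terms, so Lindeberg--Feller is the natural tool there (and strictly speaking even that is redundant under Model~\ref{eq:refmodel}, since Gaussian $\epsilon_i$ make $X^\tran\epsilon$ exactly normal for every $N$, as you yourself note for the OLS piece). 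If you want that hypothesis to do genuine work in your framework, its place is in controlling the Taylor remainder uniformly as $N$ grows, or in extending beyond Gaussian errors; as written, its role is misattributed.
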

\begin{proof}
See Section \ref{sec:proofs:betaclt}
\end{proof}

This theorem can be used to construct confidence intervals for each element of $\beta$ that have the correct coverage probability as $r$ and $N$ approach infinity. However, doing so requires computing $(X^\tran X)^{-1}$, which requires $O(Np^2)$ computation time, and thus we proceed in a different direction.

\subsection{Bootstrap}\label{sec:bootstrap}

Because the main bottleneck in applying Theorem \ref{thm:betaclt} to construct confidence intervals is in computing the variance of $\betahw$, we could consider using the bootstrap as follows.
For a total of $B$ times, the bootstrap samples with replacement $N$ observations from the data set and applies Algorithm \ref{alg:alglev} to the sample. Then, for $j=1,\ldots,p$, the standard deviation $\delta_j$ of the $j$th coordinate of the algorithmic leveraging estimates is computed; the level $1-\alpha$ bootstrap confidence interval for $\beta_j$ is constructed as
\begin{align*}
(\betahw)_j \pm z_{1-\alpha/2}\delta_j
\end{align*}
where $z_{1-\alpha/2}$ is the $1-\alpha/2$ quantile of the standard normal distribution.
The corresponding test of significance would be to reject the null hypothesis $\beta_j=0$ when $|(\betahw)_j| > z_{1-\alpha/2}\delta_j$.

However, since in practice we must use fairly large values of $B$, usually at least on the order of $10^2$, the computational complexity of the bootstrap procedure is at least $O(BN)$. It is more expensive than our proposal presented in the next section, and we show that experimentally it may not lead to valid confidence intervals. Moreover, the asymptotic guarantees of the bootstrap would require both $N$ and $r$ to approach infinity, while our proposal's guarantees only require $r$ to approach infinity.

\section{Inference on \texorpdfstring{$\betahw$}{}}\label{sec:inference}

This section presents our proposed approach for uncertainty quantification for the algorithmic leveraging estimator $\betahw$, both when the error variance $\sigma^2$ is known and when it is unknown.

Consider the distribution of $\betahw$ conditional on $W$, i.e. conditional on the sample.
$\betahw$ is normally distributed with mean and variance
\begin{align}
&\e(\betahw \mid W) = \beta \label{eq:betahwexp} \\
\begin{split}\label{eq:betahwvar}
\var(\betahw \mid W) = \sigma^2(X^\tran WX)^{-1}X^\tran W^2X(X^\tran WX)^{-1}
\end{split}
\end{align}

We write
\begin{align*}
X^\tran W^2 X &= (S_X^\tran X)^\tran D^2S_X^\tran S_X D^2(S_X^\tran X)
\end{align*}
Entry $(i,j)$ of $S_X^\tran S_X$ is $1$ if the $i$th and $j$th sample are the same observation, and $0$ otherwise. Thus, $S_X^\tran S_X$ can be computed in $O(r^2)$ time. Note that we already know $S_X^\tran X$ and $D^2$ from computing $\betahw$. Since $S_X^\tran X$ is $r\times p$, $D^2$ is diagonal of size $r$, and $S_X^\tran S_X$ is $r \times r$, $X^\tran W^2 X$ can be computed in $O(pr^2)$ time. Because $(X^\tran W X)^{-1}$ can be found from the computation of $\betahw$, $\var(\betahw \mid W)$ can be computed in $O(pr^2)$ time. If we choose $r \sim N^{1/2-\delta}$ for some $\delta\geq 0$, then that computation time is $O(Np)$.

\subsection{\texorpdfstring{$\sigma^2$}{} is Known}\label{sec:infsigmaknown}

By the argument in the previous paragraph, using the following theorem, we can get exact confidence intervals for each element of $\beta$ in $O(Np)$ time.

\begin{theorem}\label{thm:CIsigmaknown}
For $0<\alpha<1$, an exact level $1-\alpha$ confidence interval for $\beta_j$ based on $\betahw$ is
\begin{align*}
&(\betahw)_j \pm z_{1-\alpha/2}\sigma\sqrt{((X^\tran WX)^{-1}X^\tran W^2X(X^\tran WX)^{-1})_{jj}}.
\end{align*}
\end{theorem}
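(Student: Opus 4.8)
The plan is to condition on the sample --- equivalently on $W$ --- to produce an exact standard-normal pivot whose conditional law does not depend on $W$, and then integrate back over $W$ to obtain exact unconditional coverage. First I would record that, by the standing assumption that the sampling distribution $\pi$ does not depend on the responses, the weight vector $\mathbf{w}$ (and hence $W$) is independent of the error vector $\epsilon$. Conditioning on $W$ therefore leaves the errors distributed as $\dnorm(0,\sigma^2 I_N)$, so that $Y \mid W \sim \dnorm(X\beta,\sigma^2 I_N)$.

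Since $\betahw=(X^\tran WX)^{-1}X^\tran WY$ is a linear function of $Y$ with coefficient matrix fixed once $X$ and $W$ are given, it is conditionally Gaussian. Combining this with the conditional mean \eqref{eq:betahwexp} and conditional variance \eqref{eq:betahwvar} already established gives
\begin{align*}
\betahw \mid W \sim \dnorm\bigl(\beta,\ \sigma^2(X^\tran WX)^{-1}X^\tran W^2X(X^\tran WX)^{-1}\bigr).
\end{align*}
Writing $V=(X^\tran WX)^{-1}X^\tran W^2X(X^\tran WX)^{-1}$ and reading off the $j$th diagonal entry of this marginal, $(\betahw)_j \mid W \sim \dnorm(\beta_j,\sigma^2 V_{jj})$, so that
\begin{align*}
T_j=\frac{(\betahw)_j-\beta_j}{\sigma\sqrt{V_{jj}}}, \qquad T_j \mid W \sim \dnorm(0,1)
\end{align*}
is standard normal conditional on $W$, for every realization of $W$.

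The final step is the transition from conditional to unconditional coverage. Because the conditional law of $T_j$ is $\dnorm(0,1)$ and does not involve $W$, the tower property yields
\begin{align*}
P\bigl(|T_j|\le z_{1-\alpha/2}\bigr)=\e\bigl[P(|T_j|\le z_{1-\alpha/2}\mid W)\bigr]=\e[1-\alpha]=1-\alpha,
\end{align*}
so $T_j$ is marginally standard normal as well. Rearranging the event $\{|T_j|\le z_{1-\alpha/2}\}$ to isolate $\beta_j$ produces exactly the stated interval, and the displayed probability shows that its coverage equals $1-\alpha$ for every finite $r$ and $N$, which is the claimed exactness.

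I expect no serious obstacle: the result is a finite-sample consequence of the Gaussianity of the errors together with the exact conditional first two moments already in hand. The only points that warrant care are the implicit precondition that $X^\tran WX$ is invertible (which holds whenever $\betahw$ is well defined, i.e.\ whenever the sampled and rescaled design has full column rank) and the observation that the distribution of $T_j$ is free of $W$ --- this is precisely what licenses averaging the conditional coverage to recover the unconditional guarantee, and it is what distinguishes this exact interval from the merely asymptotic interval of Theorem~\ref{thm:betaclt}.
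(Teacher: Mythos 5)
Your proof is correct and follows essentially the same route as the paper: condition on $W$, use the conditional normality of $\betahw$ with the moments \eqref{eq:betahwexp} and \eqref{eq:betahwvar} to get exact conditional coverage $1-\alpha$, then average over $W$ (the tower property) to obtain the unconditional guarantee. Your write-up merely makes explicit two points the paper leaves implicit --- the independence of $W$ and $\epsilon$ that justifies the conditional Gaussian law, and the invertibility of $X^\tran WX$ --- which is a welcome but not substantively different elaboration.
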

\begin{proof}
See Section \ref{sec:proofs:CIsigmaknown}
\end{proof}

By the correspondence between confidence intervals and hypothesis testing, in $o(Np^2)$ time we can also test for the significance of each regression coefficient estimated by algorithmic leveraging using, for $j=1,\ldots,p$, the hypothesis $H_0:\beta_j=0$. Specifically, a test with significance level $\alpha$ rejects $H_0$ when 
\begin{align}
\begin{split}
& |(\betahw)_j| \geq z_{1-\alpha/2}\sigma\sqrt{((X^\tran WX)^{-1}X^\tran W^2X(X^\tran WX)^{-1})_{jj}}.
\end{split}
\end{align}

\subsection{\texorpdfstring{$\sigma^2$}{} Is Unknown}\label{sec:infsigmaunknown}

In practice, the error variance is rarely known. In this section, we discuss this more realistic situation.

We estimate $\sigma^2$ analogously to OLS. Letting ${\hat Y_W=X\betahw=X\beta+H_W\epsilon}$ be the predicted values, define
\begin{align}\label{eq:sigmahat}
\begin{split}
\sigmah &= \dfrac{1}{N-p}\|Y-\hat Y_W\|_2^2= \dfrac{1}{N-p}\|Y-H_WY\|_2^2=\dfrac{1}{N-p}\|(I-H_W)\epsilon\|_2^2,
\end{split}
\end{align}
where $H_W=X(X^\tran WX)^{-1}X^\tran W$ and $\hat Y_W=H_WY$. $\sigmah$ can be computed in $O(Np)$ time given $\betahw$. 

Classical statistical inference, which allows us to find exact confidence intervals for regression coefficients, computes the exact distribution of $\sigmah$. 
However, in our case, that distribution depends on the singular values of $I-H_W$, which in general cannot be computed in $o(Np^2)$ time. 
Instead, we utilize Lemma~\ref{lemma:wasconv}, which states that for $r$ large enough, with probability one $W\approx I_N$, where $I_N$ is the $N$-dimensional identity matrix. 

Intuitively, we may make the approximation that $W\approx I_N$. Then, $H_W \approx H$, the hat matrix from OLS, and our estimates $\betahw$ and $\sigmah$ approximate the regression coefficients and standard error estimate from OLS. 

Therefore, inspired by the classic confidence interval for OLS regression coefficients, we propose the following approximate level $1-\alpha$ confidence interval for $\beta_j$ based on $\betahw$:
\begin{align}\label{eq:CIsigmaunknown}
\begin{split}
&(\betahw)_j \pm t_{N-p,1-\alpha/2}\hat\sigma \sqrt{((X^\tran WX)^{-1}X^\tran W^2X(X^\tran WX)^{-1})_{jj}}
\end{split}
\end{align}
where $t_{N-p,1-\alpha/2}$ is the $1-\alpha/2$ quantile of $t_{N-p}$. 

These confidence intervals have the correct coverage probability asymptotically as $r$ approaches infinity.
\begin{theorem}\label{thm:CIsigmaunknown}
If $X$ has full rank, as $r\to\infty$, the confidence interval \eqref{eq:CIsigmaunknown} has coverage probability $1-\alpha$.
\end{theorem}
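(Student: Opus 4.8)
The plan is to recognize the confidence interval \eqref{eq:CIsigmaunknown} as the acceptance region of a studentized pivot and to show that this pivot converges in distribution to $t_{N-p}$ as $r \to \infty$. Write
\[ V = (X^\tran WX)^{-1}X^\tran W^2X(X^\tran WX)^{-1}, \qquad T_j = \frac{(\betahw)_j - \beta_j}{\hat\sigma\sqrt{V_{jj}}}. \]
The interval covers $\beta_j$ exactly when $|T_j| \le t_{N-p,1-\alpha/2}$, so the coverage probability equals $P(|T_j| \le t_{N-p,1-\alpha/2})$. Hence it suffices to prove $T_j \stackrel{d}{\to} t_{N-p}$ and then invoke continuity of the limiting distribution function to conclude that this probability tends to $P(|t_{N-p}| \le t_{N-p,1-\alpha/2}) = 1-\alpha$.

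Next I would establish the limit of $T_j$ by transferring the almost sure convergence of Lemma~\ref{lemma:wasconv} through the continuous maps defining each piece of $T_j$. Since $\pi$ does not depend on the responses, the sampling matrix $W$ is independent of $\epsilon$, and Lemma~\ref{lemma:wasconv} gives $W \as I_N$ (for fixed $N$, as $r\to\infty$) on the sampling probability space. Because $X^\tran X$ is invertible, for $W$ near $I_N$ the maps $W \mapsto (\betahw)_j$, $W \mapsto \sigmah$, and $W \mapsto V_{jj}$ are continuous, and evaluating at $W=I_N$ gives, almost surely and jointly in the sampling and the errors, $(\betahw)_j-\beta_j \as (\betahols)_j-\beta_j$, $\sigmah \as \tfrac{1}{N-p}\|(I-H)\epsilon\|_2^2$, and $V_{jj} \as ((X^\tran X)^{-1})_{jj}$. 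The limiting denominator is almost surely positive (the OLS variance estimate is positive since $N>p$ and $\epsilon$ has a density, and $((X^\tran X)^{-1})_{jj}>0$ by positive definiteness), so $T_j$ converges almost surely to the OLS studentized statistic
\[ T_j^{\mathrm{OLS}} = \frac{(\betahols)_j - \beta_j}{\hat\sigma_{\mathrm{OLS}}\sqrt{\big((X^\tran X)^{-1}\big)_{jj}}}. \]

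It then remains to identify the distribution of $T_j^{\mathrm{OLS}}$. Here I would invoke the classical OLS sampling theory, which uses the full rank of $X$ and the Gaussian errors: $(\betahols)_j - \beta_j$ is normal with variance $\sigma^2((X^\tran X)^{-1})_{jj}$, the quantity $(N-p)\hat\sigma_{\mathrm{OLS}}^2/\sigma^2$ is $\chi^2_{N-p}$, and the two are independent because the fitted and residual parts of $\epsilon$ are orthogonal jointly Gaussian vectors. Consequently $T_j^{\mathrm{OLS}} \sim t_{N-p}$ exactly. Almost sure convergence implies convergence in distribution, so $T_j \stackrel{d}{\to} t_{N-p}$, and the continuity argument of the first paragraph finishes the proof.

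The main obstacle, and the reason the result is only asymptotic, is that for finite $r$ the matrix $H_W$ is not a symmetric projection, so $\sigmah$ is neither exactly scaled-$\chi^2$ nor exactly independent of the numerator; there is no finite-$r$ pivot. The argument sidesteps this by routing everything through the almost sure limit $W=I_N$, where the exact OLS distribution is available. The point requiring the most care is the legitimacy of that transfer: one must confirm that $W$ and $\epsilon$ live on a common product space with $W$ independent of $\epsilon$, that the continuous mapping theorem applies on the event $\{X^\tran WX \text{ invertible}\}$, whose probability tends to one, and that the limiting denominator does not vanish so that the ratio is continuous at the limit.
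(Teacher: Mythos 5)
Your proposal is correct and follows essentially the same route as the paper's proof: both transfer the statistic through the almost sure limit $W \to I_N$ from Lemma~\ref{lemma:wasconv} via the continuous mapping theorem, arriving at the exact $t_{N-p}$ distribution of the classical OLS studentized statistic. Your write-up is in fact somewhat more careful than the paper's (making explicit the independence of $W$ and $\epsilon$, the invertibility event, and the almost sure positivity of the limiting denominator), but these are refinements of the same argument rather than a different approach.
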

\begin{proof}
See Section \ref{sec:proofs:CIsigmaunknown}
\end{proof}

By the argument in the previous section and the fact that $\sigmah$ is computed in $O(Np)$ time, the above confidence interval can be computed in $o(Np^2)$ time. As before, in $o(Np^2)$ time we can also test for the significance of each regression coefficient estimated by algorithmic leveraging using, for $j=1,\ldots,p$, the hypothesis $H_0:\beta_j=0$. Specifically, we propose a test with approximate significance level $\alpha$ that rejects $H_0$ when 
\begin{align}\label{eq:sigtest}
\begin{split}
&|(\betahw)_j|\geq t_{1-\alpha/2,N-p}\hat\sigma \sqrt{((X^\tran WX)^{-1}X^\tran W^2X(X^\tran WX)^{-1})_{jj}}.
\end{split}
\end{align}

\section{Experimental Results}\label{sec:experiments}

In this section, we illustrate the behavior of our proposed confidence intervals \eqref{eq:CIsigmaunknown} and significance tests \eqref{eq:sigtest} and compare them to the bootstrap. Our experiments were carried out in $\textsf{R}$, using the special package \textit{mvtnorm} \cite{mvtnorm}. They follow the same setup as in Ma et al. \cite{MMY15}.

We use data simulated from model~\eqref{eq:refmodel}. We consider the data sizes $p=10,50$ and $N=1000,5000$. For each tuple $(p,N)$, we generate $x_1,\ldots,x_N$ independently from the multivariate $t$-distribution with three degrees of freedom and covariance matrix $\Sigma$, where $(\Sigma)_{i,j}=2\times 0.5^{|i-j|}$. This will give a matrix $X$ with some large statistical leverage scores and some small ones \cite{MMY15}. The sampling distribution $\pi$ was chosen to be proportional to the approximate leverage scores $\tilde{h}_{ii}$ computed using the algorithm in Drineas et al. \cite{DMMW12}.

For each $p$, we randomly choose half the entries of $\beta$ to be zero, a quarter to be $+1$, and the rest to be $-1$. Then, for each tuple $(X,\beta)$ we repeat the following procedure $100$ times:
\begin{enumerate}
\item Generate $Y \sim \dnorm(X\beta,9I_N)$
\item For $r=100,200,\ldots,N$,
\begin{itemize}
\item Carry out Algorithm~\ref{alg:alglev} with $\pi_i \propto \tilde{h}_{ii}$ to obtain $\betahw$.
\item Compute $\sigmah$ using \eqref{eq:sigmahat} and $(X^\tran WX)^{-1}X^\tran W^2X(X^\tran WX)^{-1}$ as described in Section~\ref{sec:infsigmaknown}. Record the computation time used.
\item Compute the bootstrap standard deviations $\delta_j$ as described in Section \ref{sec:bootstrap} with $B=100$ and record the computation time used.
\item For $\alpha=0.01,0.02,\ldots,1$, 
\begin{itemize}
\item For $j=1,\ldots,p$, compute a level $\alpha$ confidence interval for $\beta_j$ using \eqref{eq:CIsigmaunknown}. Calculate the actual coverage probability, i.e. the fraction of $j$'s for which the interval includes the true value of $\beta_j$; the nominal coverage probability is $\alpha$.
\item For $j=1,\ldots,p$, test $H_0: \beta_j=0$ at significance level $\alpha$ using \eqref{eq:sigtest}. Compute the proportion of type 1 and 2 errors.
\item Repeat the above two steps for the bootstrap, constructing the confidence intervals as described in Section \ref{sec:bootstrap} and using the corresponding tests of significance.
\end{itemize}
\end{itemize}
\end{enumerate}
Finally, for each tuple $(p,N,r,\alpha)$, for both our algorithm and bootstrap we compute the averages of the computation times, actual coverage probabilities, and proportions of type 1 and type 2 errors over the $100$ iterations.

For each tuple $(p,N)$, we plot four graphs to evaluate and compare the quality of our proposed confidence intervals and bootstrap confidence intervals. The first plots the computation time of our algorithm and that of bootstrap versus $r$. The second plots, for both our algorithm and the bootstrap, the actual coverage probability versus the nominal coverage probability of the confidence intervals at three small values of $r$. The third and fourth plot, for our algorithm and the bootstrap, the average type 1 error rate and average type 2 error rate of the significance tests at $\alpha=0.05$ versus $r$.

The graphs are shown in Figures \ref{fig:p10N1000}--\ref{fig:p50N5000}. In Section \ref{sec:experiments:roc}, for each pair of values of $N$ and $p$, we also present plots of the receiver operating characteristic (ROC) curve for various values of $r$ for both our algorithm and the bootstrap. We defer them to the appendix because they are usually used to evaluate classifiers and not methods for uncertainty quantification. 

We see that our algorithm is indeed faster than the bootstrap, with the gap widening as $r$ increases. 
Indeed, the rate of increase in computation time is greater for the bootstrap.

For $p=10$, the actual coverage probability of the confidence intervals for both our algorithm and bootstrap is approximately equal to the nominal coverage probability.
The confidence intervals constructed by our algorithm are conservative, but the actual coverage probability approaches the nominal coverage probability as $r$ increases. The latter is expected, since for larger $r$ the approximation $W \approx \mathbf{1}_N$ made in Section \ref{sec:infsigmaunknown} should be more accurate.
The conservativeness of the confidence intervals constructed by our algorithm seems to increase with $p$. 
The bootstrap confidence intervals are generally less conservative than those from our algorithm, but may have less than the nominal coverage probability as shown in Figure \ref{fig:coveragep10n5000}.

The type 1 error of our algorithm is generally below that of bootstrap, especially for smaller $r$.
While type 1 error of our algorithm is generally smaller than the nominal $\alpha=0.05$, the type 1 error of the bootstrap may be much more, in particular for $p=10$. This is consistent with the fact that the bootstrap coverage probability may be less than the nominal value.
It also appears that for fixed $N$ and $r$, as $p$ increases the type 1 errors of our algorithm and of bootstrap decreases.

Analogously, the type 2 error of our algorithm is usually above that of bootstrap.
However, for both approaches, the type 2 error is decreasing with $r$, below $0.2$ for $r\geq 300$, and close to $0$ at $r=N$.
Therefore, as long as we don't use too small samples, we can obtain an estimate of $\beta$ in less time without sacrificing much power in our significance tests.
As $p$ increases the type 2 errors of both approaches increase.

\begin{figure}[ht]
	\centering
	\subfigure[Coverage Probabilities]{%
		\includegraphics[width=0.45\textwidth]{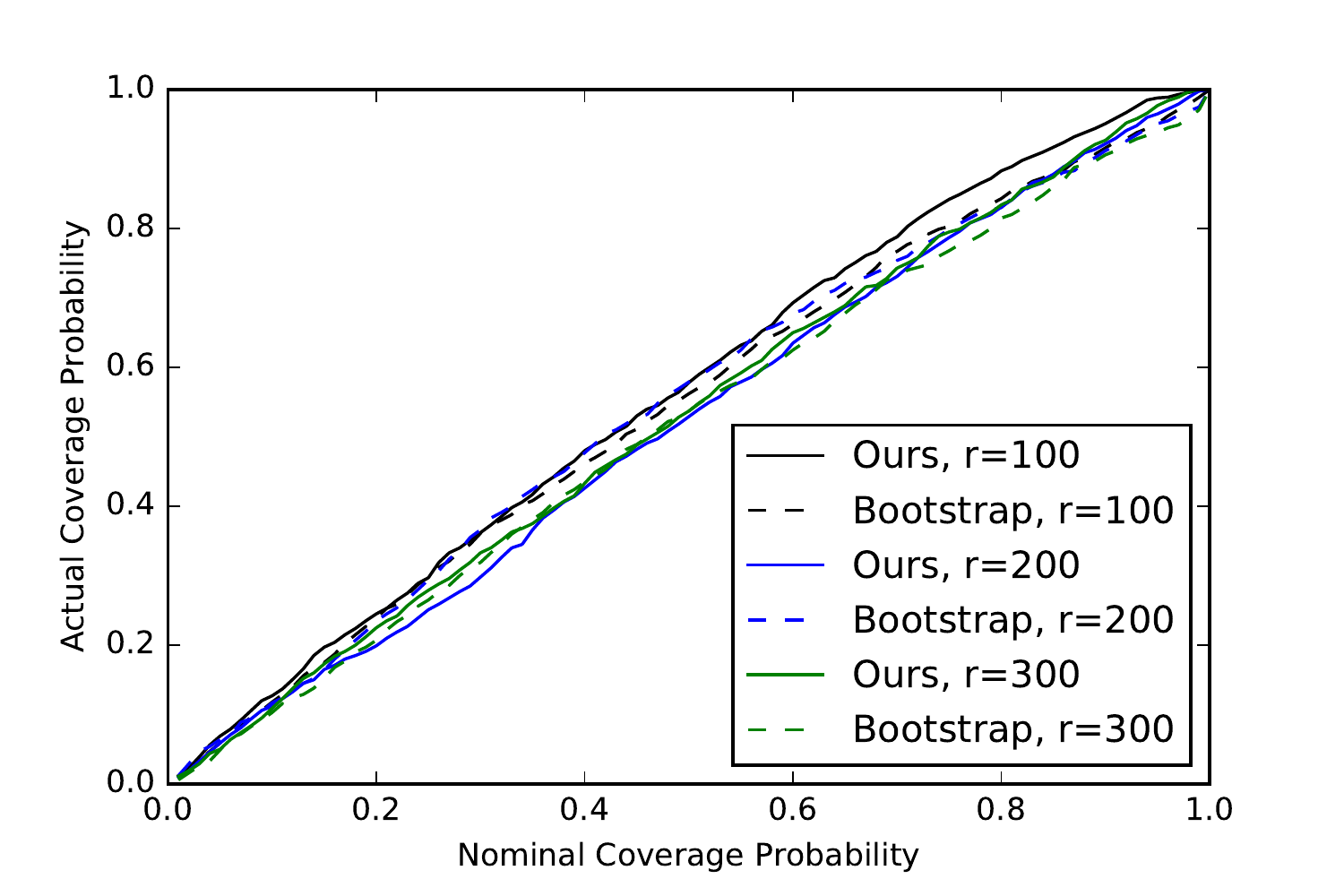}\label{fig:coveragep10n1000}
	}
	\subfigure[Computation Time]{%
		\includegraphics[width=0.45\textwidth]{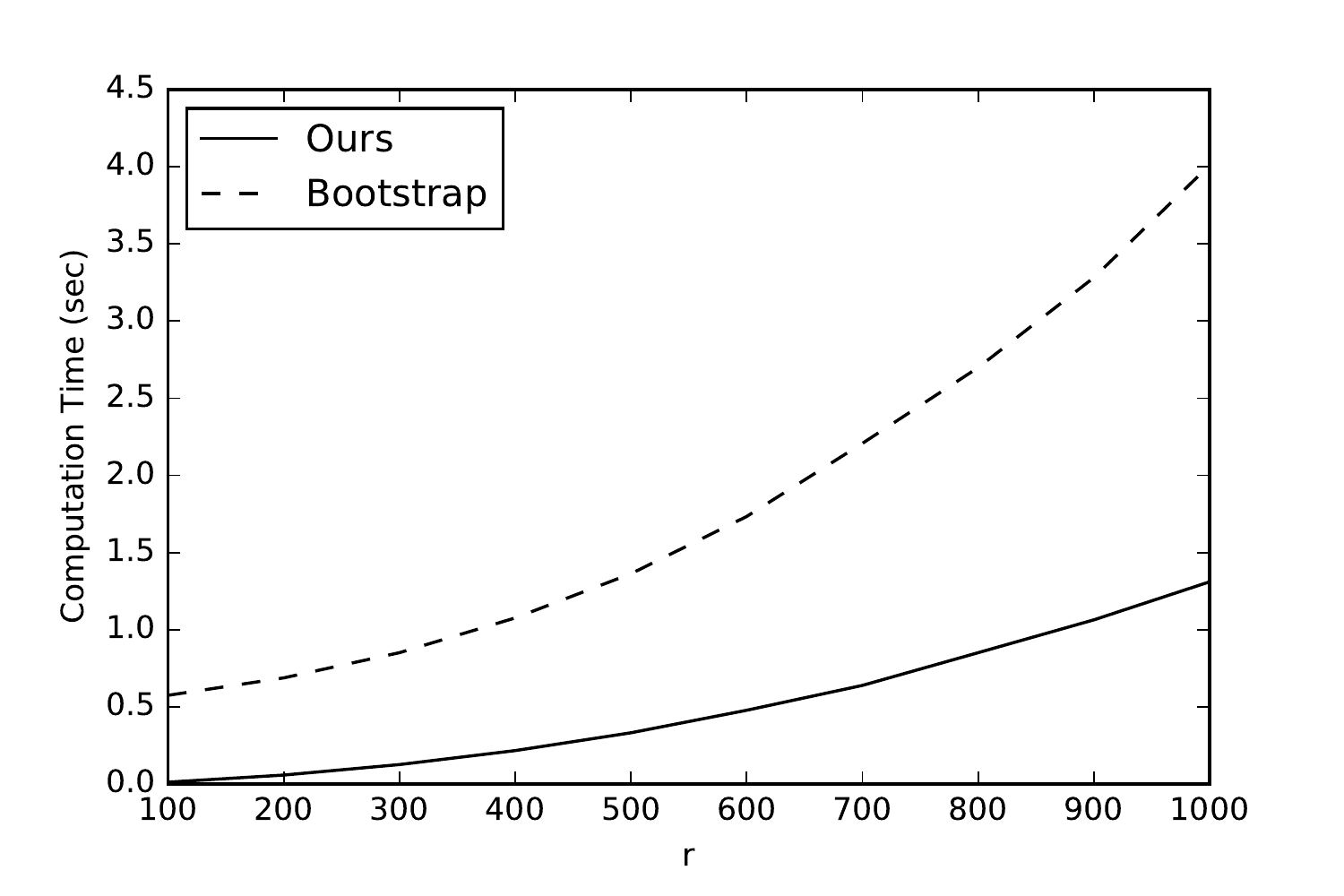}\label{fig:cpup10n1000}
	}
	\\
	\subfigure[Type 1 Error]{%
		\includegraphics[width=0.45\textwidth]{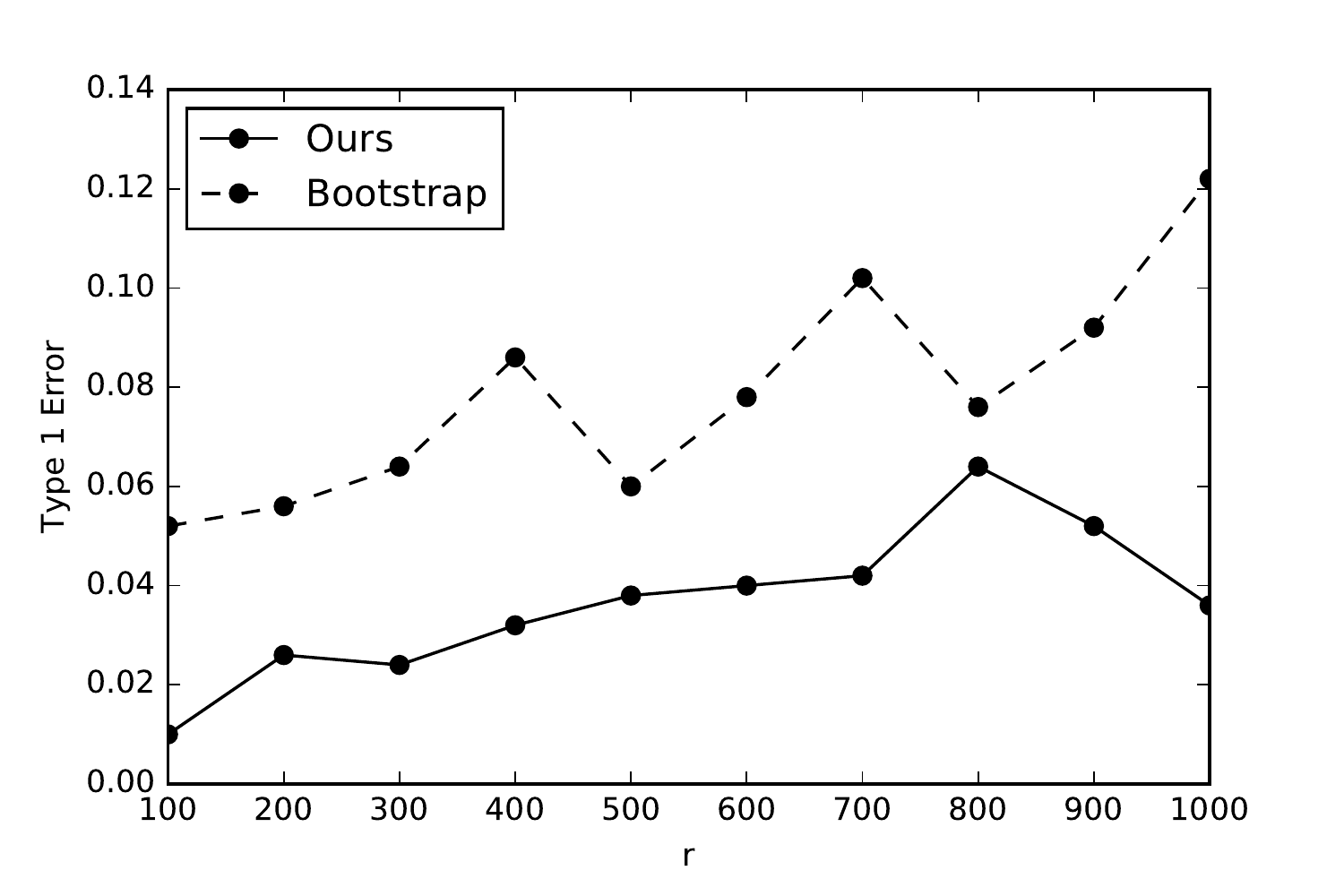}\label{fig:type1p10n1000}
	}
	\subfigure[Type 2 Error]{%
		\includegraphics[width=0.45\textwidth]{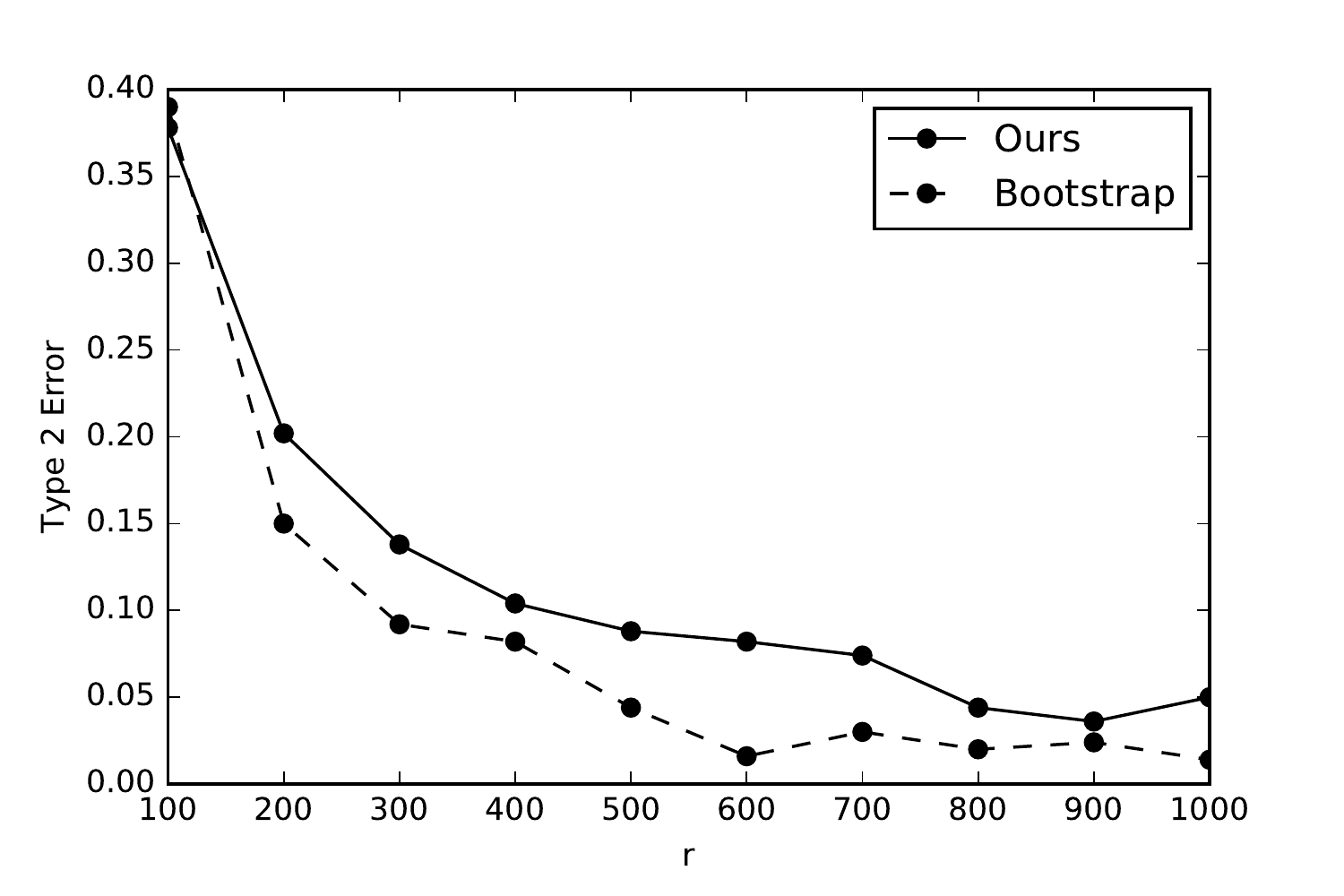}\label{fig:type2p10n1000}
	}
	\caption{$p=10,N=1000$}
	\label{fig:p10N1000}
\end{figure}

\begin{figure}[ht]
	\centering
	\subfigure[Coverage Probabilities]{%
		\includegraphics[width=0.45\textwidth]{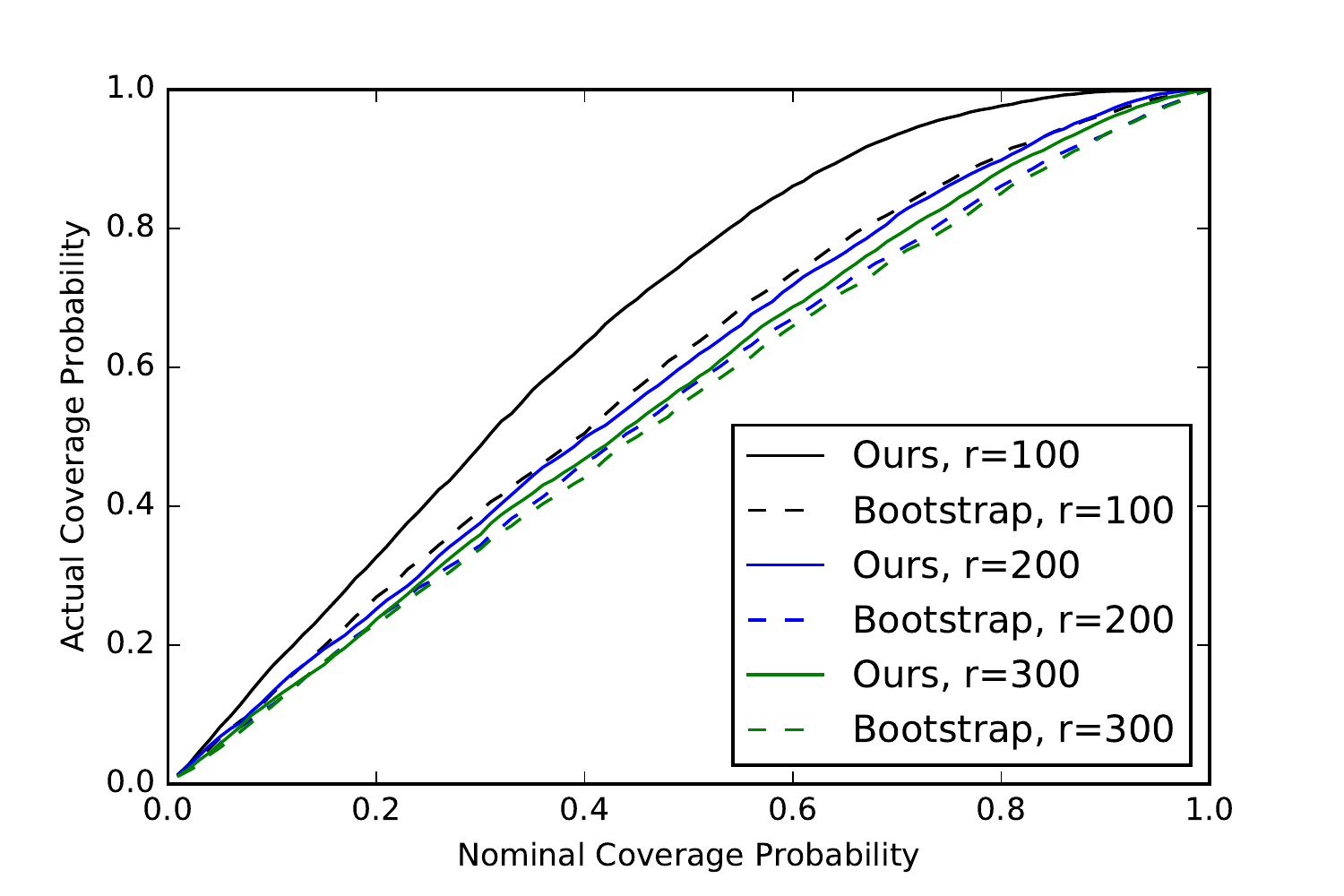}\label{fig:coveragep50n1000}
	}
	\subfigure[Computation Time]{%
		\includegraphics[width=0.45\textwidth]{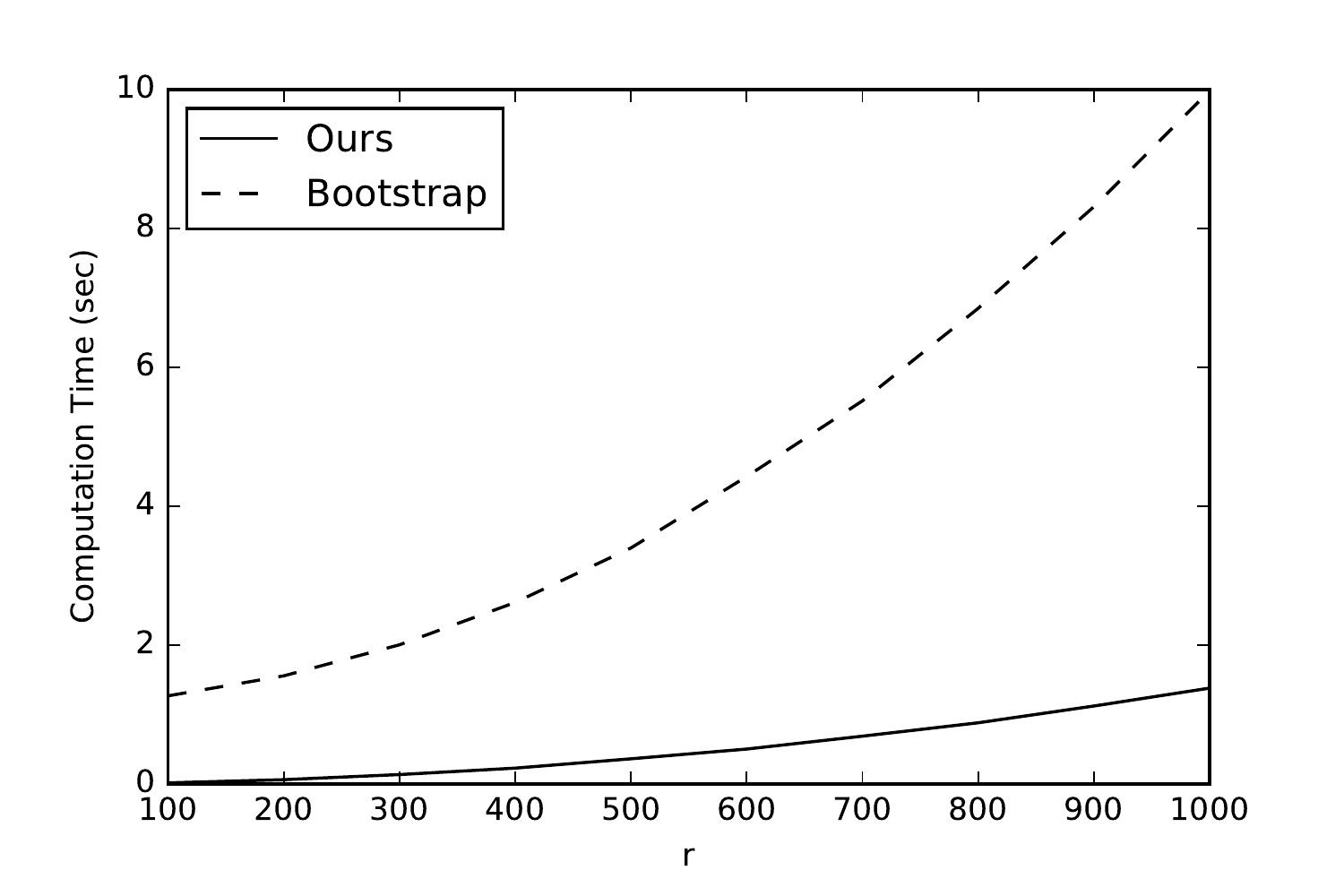}\label{fig:cpup50n1000}
	}
	\\
	\subfigure[Type 1 Error]{%
		\includegraphics[width=0.45\textwidth]{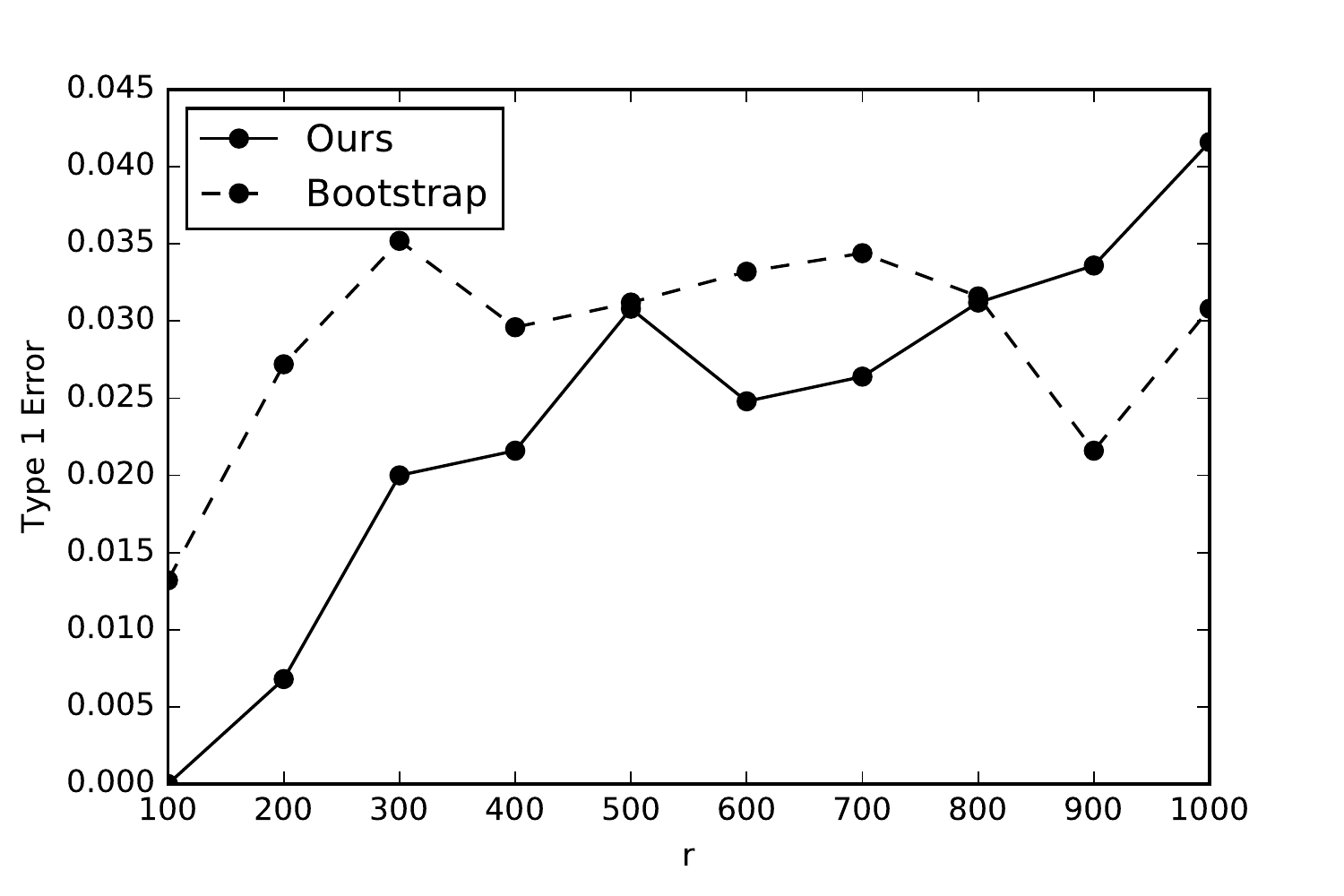}\label{fig:type1p50n1000}
	}
	\subfigure[Type 2 Error]{%
		\includegraphics[width=0.45\textwidth]{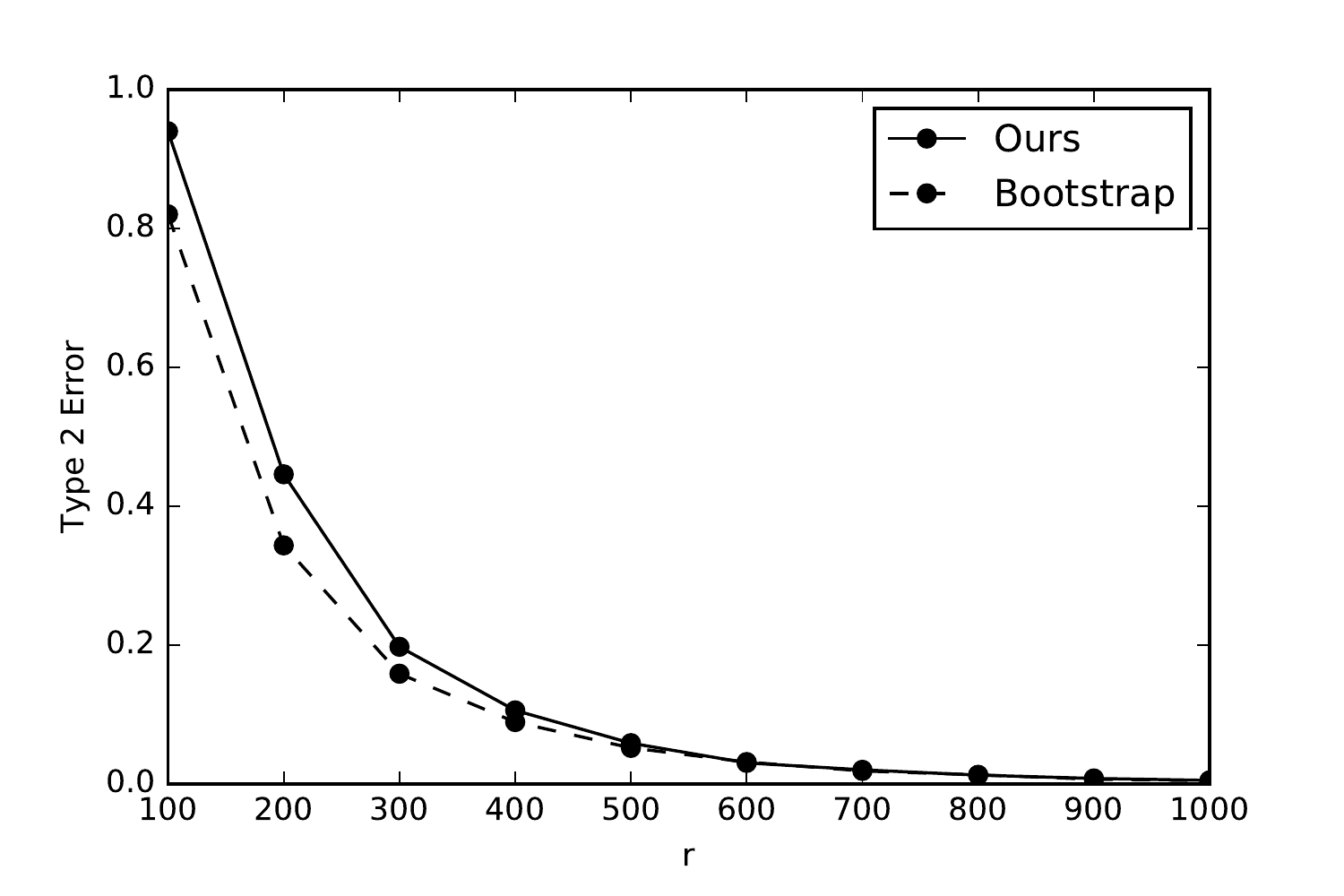}\label{fig:type2p50n1000}
	}
	\caption{$p=50,N=1000$}
	\label{fig:p50N1000}
\end{figure}

\begin{figure}[ht]
	\centering
	\subfigure[Coverage Probabilities]{%
		\includegraphics[width=0.45\textwidth]{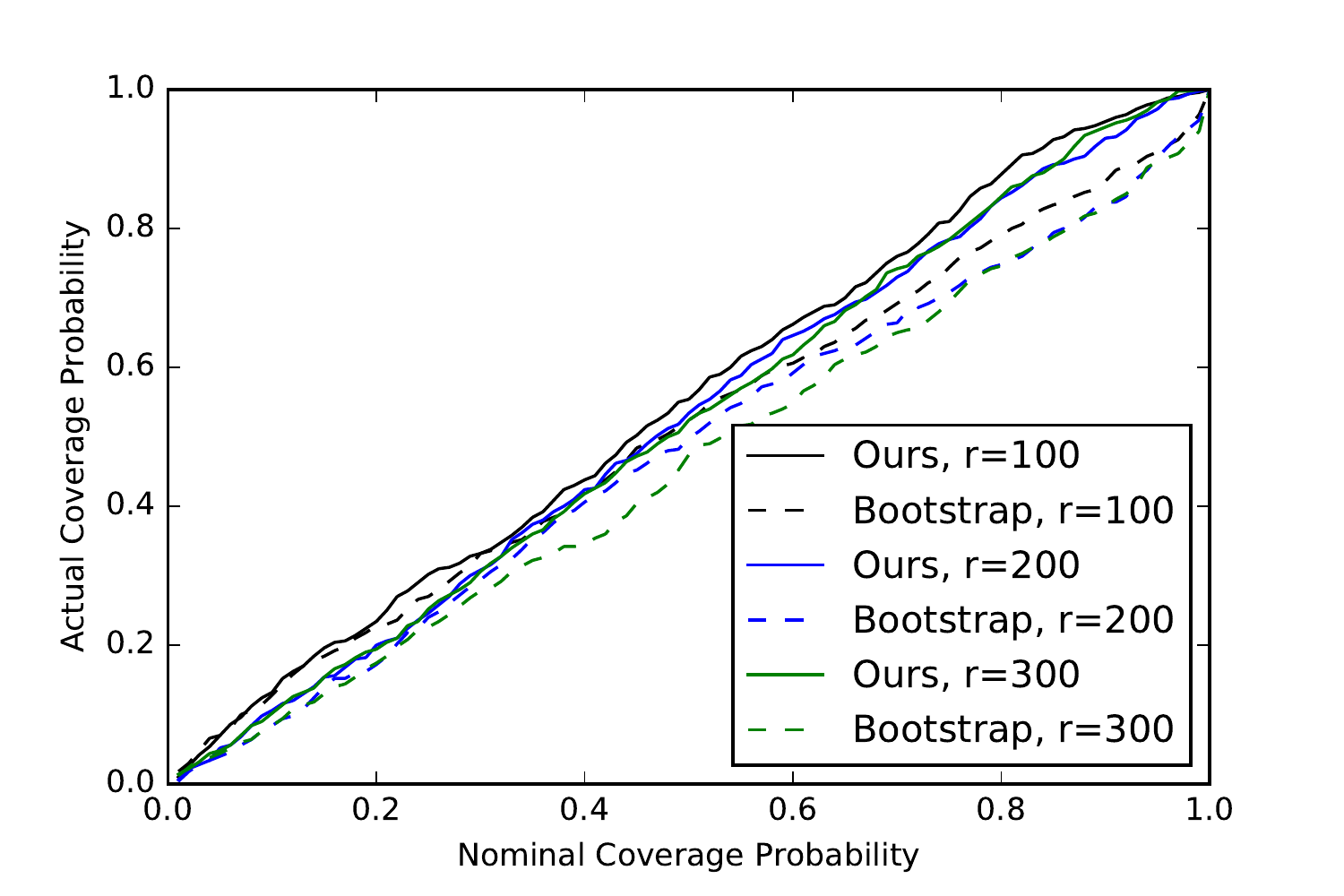}\label{fig:coveragep10n5000}
	}
	\subfigure[Computation Time]{%
		\includegraphics[width=0.45\textwidth]{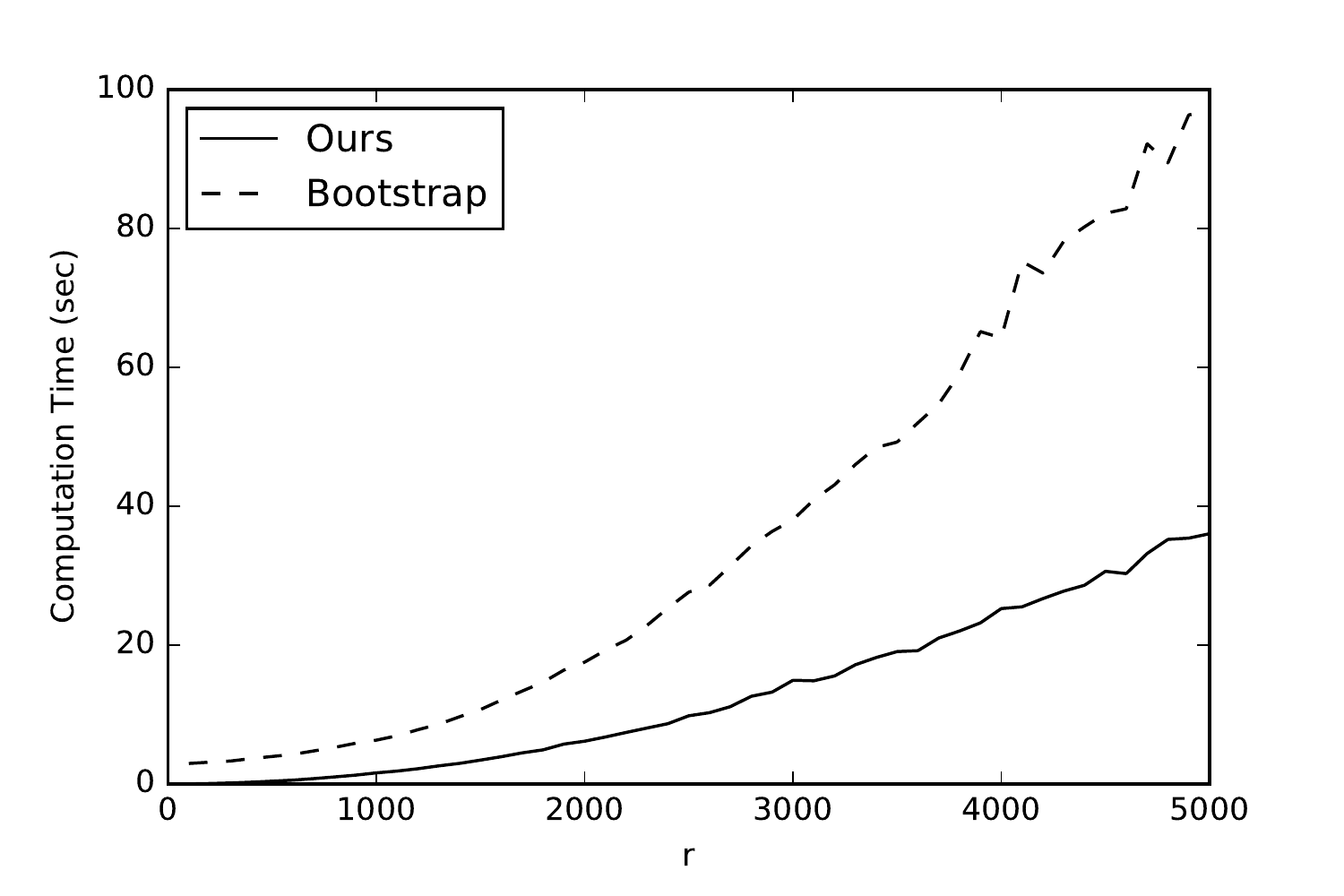}\label{fig:cpup10n5000}
	}
	\\
	\subfigure[Type 1 Error]{%
		\includegraphics[width=0.45\textwidth]{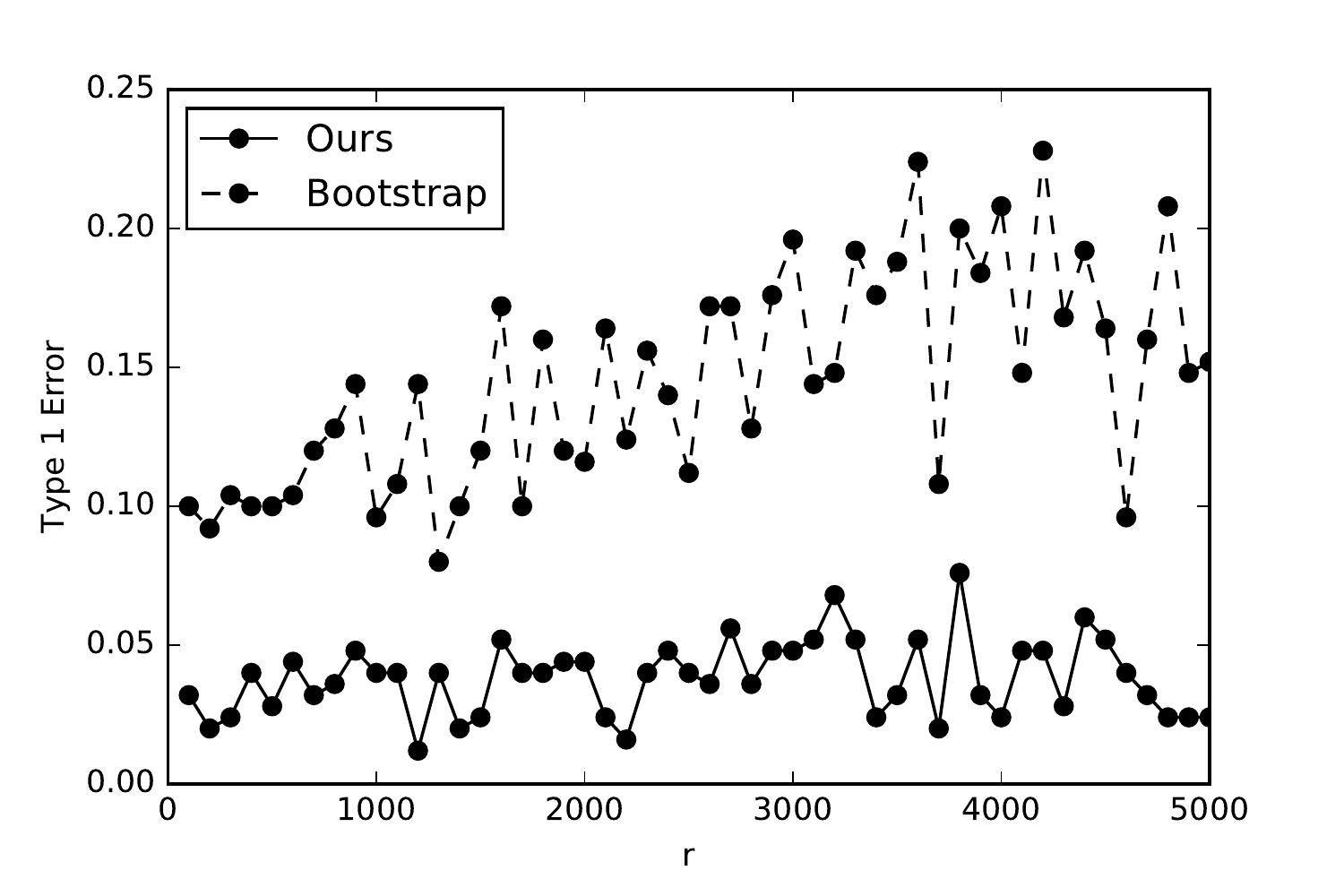}\label{fig:type1p10n5000}
	}
	\subfigure[Type 2 Error]{%
		\includegraphics[width=0.45\textwidth]{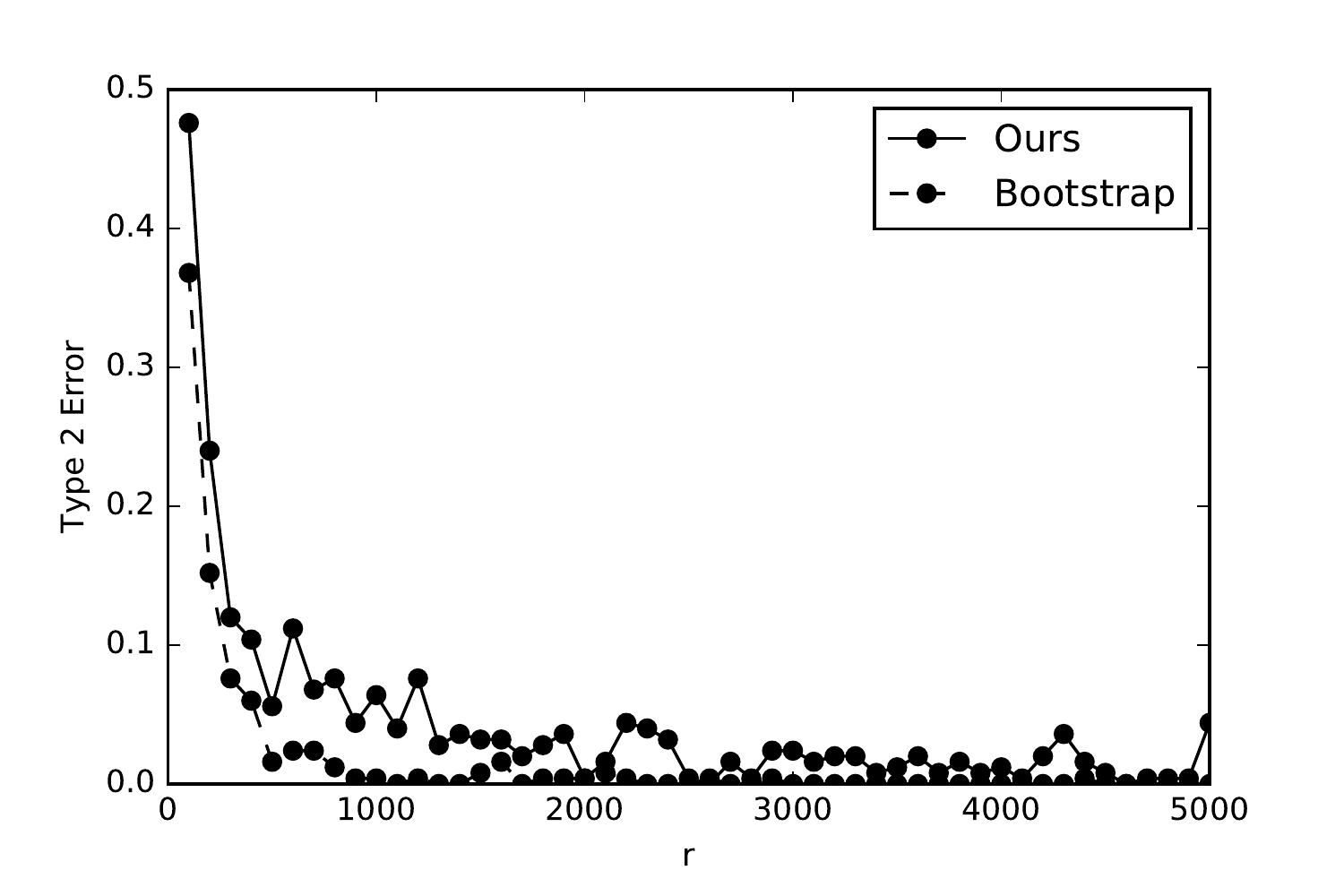}\label{fig:type2p10n5000}
	}
	\caption{$p=10,N=5000$}
	\label{fig:p10N5000}
\end{figure}

\begin{figure}[ht]
	\centering
	\subfigure[Coverage Probabilities]{%
		\includegraphics[width=0.45\textwidth]{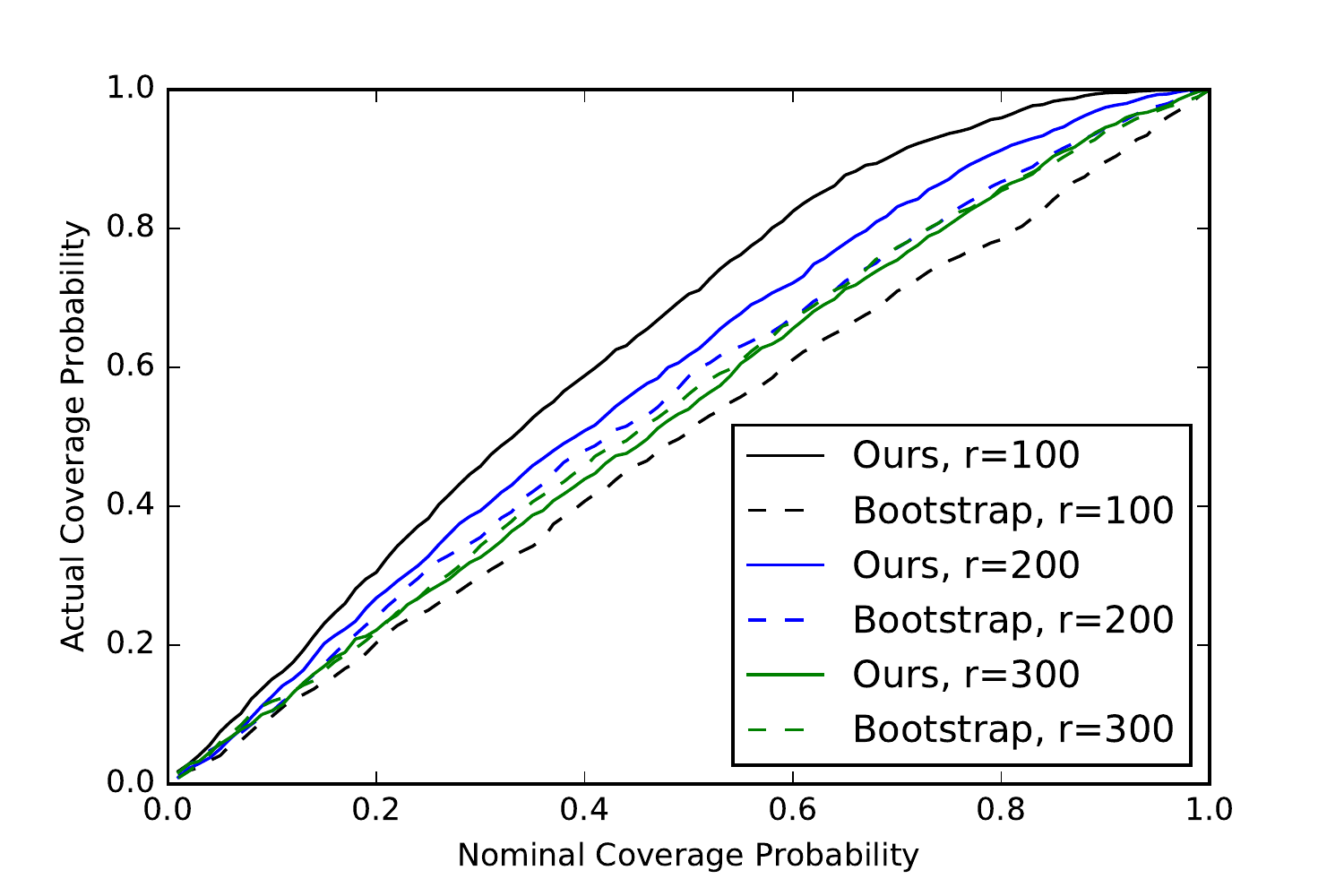}\label{fig:coveragep50n5000}
	}
	\subfigure[Computation Time]{%
		\includegraphics[width=0.45\textwidth]{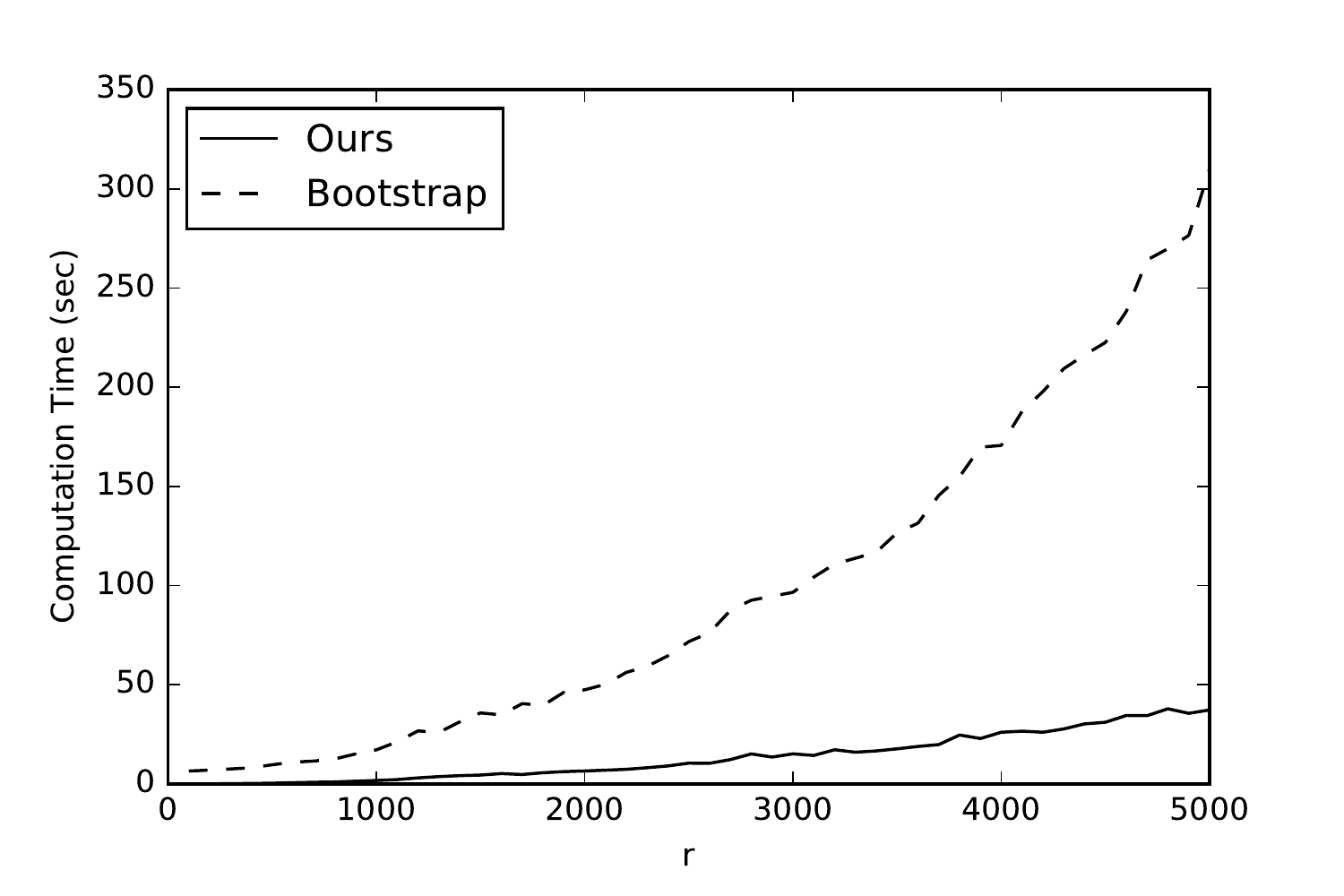}\label{fig:cpup50n5000}
	}
	\\
	\subfigure[Type 1 Error]{%
		\includegraphics[width=0.45\textwidth]{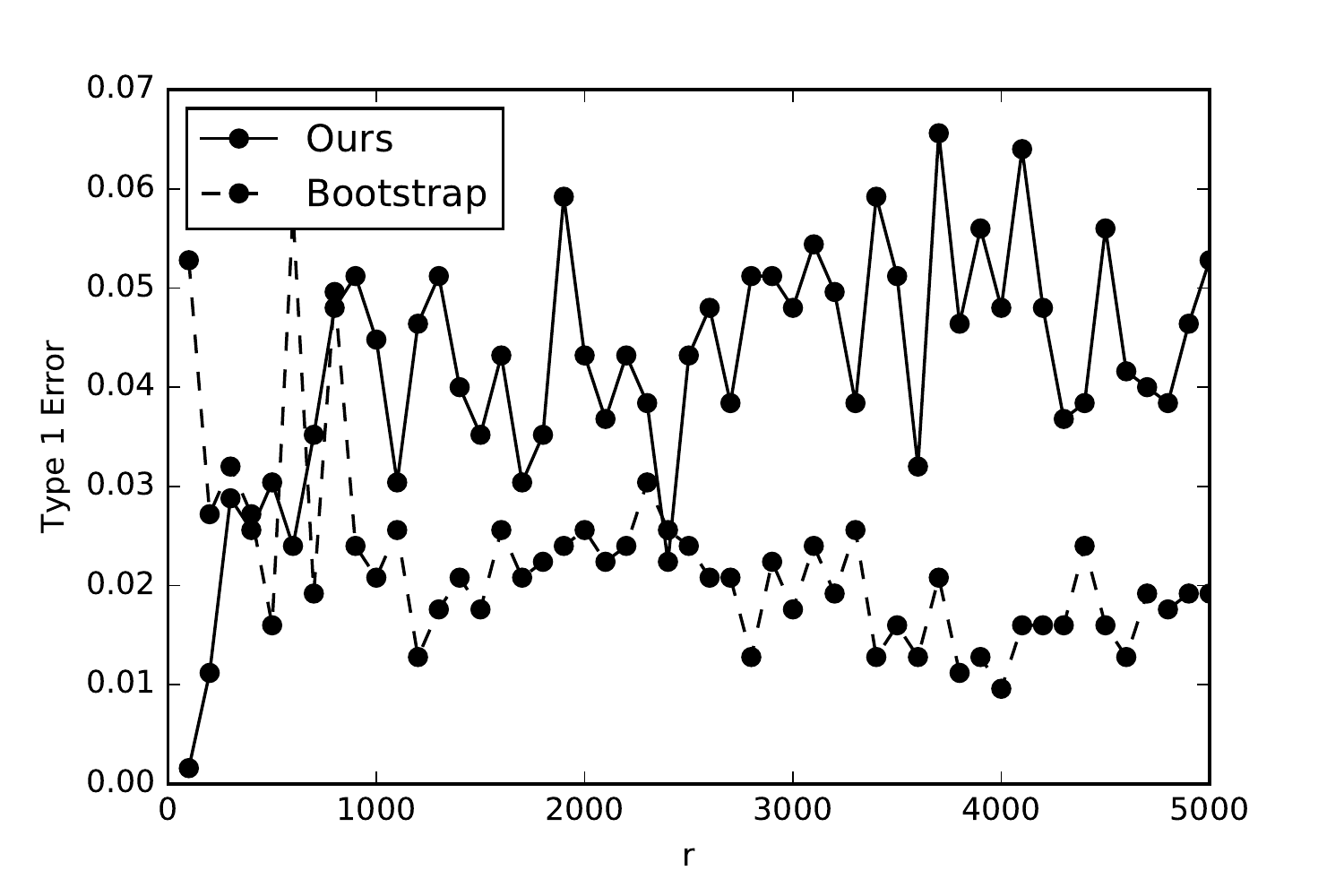}\label{fig:type1p50n5000}
	}
	\subfigure[Type 2 Error]{%
		\includegraphics[width=0.45\textwidth]{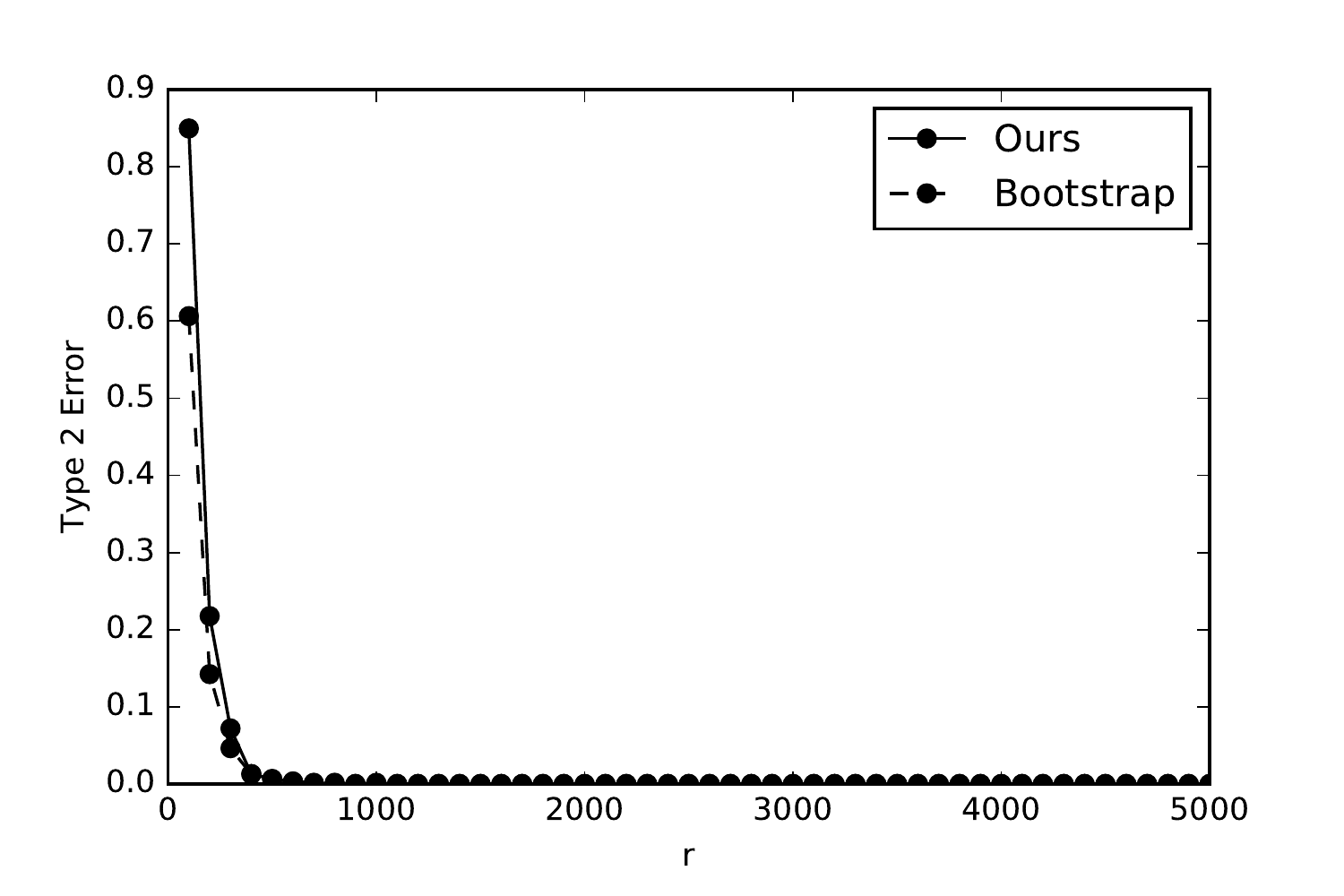}\label{fig:type2p50n5000}
	}
	\caption{$p=50,N=5000$}
	\label{fig:p50N5000}
\end{figure}

\section{Conclusion and Future Work}\label{sec:conclusion}

Learning from and mining massive data sets post great challenges given our limited storage and computational resources. Many data reduction approaches have been devised to overcome such challenges. Algorithmic leveraging is such an approach.
In this paper, for linear regression coefficients estimated using algorithmic leveraging, we described how to efficiently construct finite sample confidence intervals and significance tests. Simulations show that our proposed confidence intervals have the desired coverage probability and our proposed significance tests control the type 1 error rate and have low type 2 error rates.
The simulations also show that bootstrap confidence intervals may have smaller than the desired coverage probability. 

There are several avenues for future work investigating the statistical properties of algorithmic leveraging applied to data analyses beyond simple linear regression. For instance, we believe that determining how sampling affects feature selection in Lasso regression \cite{T96} could have important practical implications. Finally, we may consider uncertainty quantification for estimates from other data reduction methods, such as sketching algorithms, which are another popular method to deal with massive data sets.

\section*{Acknowledgements}

The author was supported by US NSF under grants DGE-114747, DMS-1407397, and DMS-1521145. 
%Any opinions, findings, and conclusions or recommendations expressed in this material are those of the 
%authors and do not necessarily reflect the views of the National Science Foundation.

\bibliographystyle{plain}
\bibliography{references}

\section{Appendix}

\subsection{Proof of Lemma \ref{lemma:wasconv}}\label{sec:proofs:wasconv}

We apply the strong law of large numbers \cite{B12}. Denote the $i$th row of a matrix $M$ by $(M)_{i\cdot}$ and the $j$th column by $(M)_{\cdot j}$. Then, 
\begin{align*}
w_j &= (S_X)_{j\cdot}D^2(S_X^T)_{\cdot j} = \sum_{i=1}^r\dfrac{(S_X)_{j,i}^2}{r\pi_j}=\dfrac{1}{r}\sum_{i=1}^r \dfrac{(S_X)_{j,i}}{\pi_j} \\
\mathbf{w} &= \dfrac{1}{r}\sum_{i=1}^r (\dfrac{(S_X)_{1,i}}{\pi_1},\ldots,\dfrac{(S_X)_{N,i}}{\pi_N})^\tran
\end{align*}
Thus, $\mathbf{w}$ is the average of $r$ independent and identically distributed random variables. Each has mean $\mathbf{1}_N$, since $(S_X)_{ji}$ is $1$ with probability $\pi_j$. The result follows.

\subsection{Proof of Theorem \ref{thm:betaclt}}\label{sec:proofs:betaclt}

We first need the following lemma from Ma et al. \cite{MMY15}. 
\begin{lemma}\label{lemma:wmoments}
\begin{align}
\e(\mathbf{w}) &= \mathbf{1}_N \\
\var(\mathbf{w}) &= \dfrac{1}{r}(diag(\dfrac{1}{\mathbf{\pi}})-\mathbf{1}_N\mathbf{1}_N^\tran)
\end{align}
where $\mathbf{1}_N$ is the $N$-dimensional vector of ones and ${\mathbf{\pi}=(\pi_1,...,\pi_N)^\tran}$.
\end{lemma}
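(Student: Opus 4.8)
The plan is to reuse the representation of $\mathbf{w}$ established in the proof of Lemma~\ref{lemma:wasconv}, namely
\[
\mathbf{w} = \dfrac{1}{r}\sum_{i=1}^r Z_i, \qquad Z_i = \left(\dfrac{(S_X)_{1,i}}{\pi_1},\ldots,\dfrac{(S_X)_{N,i}}{\pi_N}\right)^\tran,
\]
where the $Z_i$ are independent and identically distributed because the $r$ draws in Step~1 of Algorithm~\ref{alg:alglev} are made independently with replacement. The entire computation then reduces to finding the mean and covariance of a single draw $Z_1$ and invoking the standard facts that $\e(\mathbf{w})=\e(Z_1)$ and $\var(\mathbf{w})=\frac{1}{r}\var(Z_1)$ for an average of $r$ i.i.d.\ vectors.

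For the mean, I would observe that the $i$th column $(S_X)_{\cdot i}$ is a one-hot indicator: $(S_X)_{j,i}=1$ exactly when the $i$th draw selects observation $j$, which happens with probability $\pi_j$. Hence $\e((S_X)_{j,i})=\pi_j$, so $\e((Z_1)_j)=\pi_j/\pi_j=1$ for every $j$, giving $\e(\mathbf{w})=\e(Z_1)=\mathbf{1}_N$.

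For the covariance, the key structural fact is that a single draw selects exactly one observation, so the column $(S_X)_{\cdot i}$ has precisely one nonzero entry. I would compute the diagonal and off-diagonal entries of $\var(Z_1)$ separately. On the diagonal, using $(S_X)_{j,i}^2=(S_X)_{j,i}$ (an indicator equals its square), $\e((Z_1)_j^2)=\e((S_X)_{j,i})/\pi_j^2=1/\pi_j$, so $\var((Z_1)_j)=1/\pi_j-1$. Off the diagonal, for $j\neq k$ the product $(S_X)_{j,i}(S_X)_{k,i}$ is identically zero because a single draw cannot simultaneously equal observations $j$ and $k$; therefore $\e((Z_1)_j(Z_1)_k)=0$ and $\cov((Z_1)_j,(Z_1)_k)=0-1=-1$. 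Assembling these, $\var(Z_1)$ has $j$th diagonal entry $1/\pi_j-1$ and every off-diagonal entry equal to $-1$, which is exactly $\diag(1/\pi)-\mathbf{1}_N\mathbf{1}_N^\tran$. Dividing by $r$ yields the claimed covariance.

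There is no serious obstacle here; the only thing to be careful about is keeping the within-draw dependence (the negative off-diagonal term, which encodes that selecting one observation precludes selecting another on the same draw) separate from the across-draw independence (which is what produces the $1/r$ factor). The one-hot structure of each sampling column is what makes both the squared-indicator simplification on the diagonal and the vanishing cross-products off the diagonal go through cleanly.
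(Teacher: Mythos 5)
Your proposal is correct in every step: the mean computation, the squared-indicator simplification on the diagonal ($\e((Z_1)_j^2)=\pi_j/\pi_j^2=1/\pi_j$), and the observation that the one-hot structure of each sampling column forces $\e((Z_1)_j(Z_1)_k)=0$ and hence $\cov((Z_1)_j,(Z_1)_k)=-1$ for $j\neq k$, assembling to $\var(\mathbf{w})=\frac{1}{r}\bigl(\diag(1/\pi)-\mathbf{1}_N\mathbf{1}_N^\tran\bigr)$. Worth knowing: the paper does not actually prove this lemma at all --- it imports it verbatim from Ma et al.\ \cite{MMY15} --- so your derivation supplies a self-contained argument where the paper offers only a citation. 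Your route is also well matched to the paper's internals: the decomposition $\mathbf{w}=\frac{1}{r}\sum_{i=1}^r Z_i$ with i.i.d.\ one-hot columns is exactly the representation the paper uses in its appendix proof of Lemma~\ref{lemma:wasconv} (where it already notes $\e(Z_1)=\mathbf{1}_N$ for the strong law), and you extend it with the covariance calculation. The one hypothesis you rely on implicitly, that $1/\pi_j$ is well defined, is covered by the paper's standing assumption in Section~\ref{sec:alglev} that $\pi$ has no zero entries; your framing of the within-draw negative correlation versus across-draw independence (the source of the $1/r$ factor) is precisely the right bookkeeping, and I see no gap.
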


Since $\var(\mathbf{w}) = O(1/r)$, it follows that $\mathbf{w}$ converges in probability to $\mathbf{1}_N$ as ${r \to \infty}$ with $N$ fixed. 

We rewrite $\betahw$. Let $\epsilon$ be the $N$-dimensional vector of $\epsilon_i$'s. 
\begin{align*}
\betahw &= (X^\tran WX)^{-1}X^\tran WY \\
&= (X^\tran WX)^{-1}X^\tran W(X\beta+\epsilon) \\
&= \beta+(X^\tran WX/N)^{-1}X^\tran W\epsilon/N
\end{align*}
By Slutsky's Theorem \cite{B12} and Lemma 1, 
as $r\to\infty$ for fixed $N$, $X^\tran WX/N$ converges in probability to $X^\tran X/N$. We have assumed that as $N\to\infty$, $X^\tran X/N$ converges to a finite, positive definite matrix $M$.

Therefore, as $r, N\to\infty$, $(X^\tran WX/N)^{-1}$ converges in probability to $M^{-1}$.
In order to show that $\betahw$ is asymptotically normal, we just have to show that $X^\tran W\epsilon/N$ is asymptotically normal. 

With $N$ fixed, as $r\to\infty$, $X^\tran W\epsilon/N$ converges in probability to $X^\tran\epsilon/N$.
\begin{align*}
\dfrac{X^\tran\epsilon}{N} &= \dfrac{1}{N}\sum_{i=1}^N x_i\epsilon_i
\end{align*}
where $x_i\epsilon_i$ has mean zero and variance $\sigma^2 x_i x_i^\tran$. We have assumed that as $N\to\infty$, $\sigma^2\sum_{i=1}^N x_i x_i^\tran/N$ converges to $\sigma^2 M$. Assuming that for each $i=1,\ldots,N$,
\begin{align*}
\lim_{N\to\infty}(\sum_{i=1}^N x_i x_i^\tran)^{-1}x_i x_i^\tran = 0,
\end{align*}
by the multivariate Lindeberg-Feller central limit theorem \cite{greene}, $\dfrac{1}{N}\sum_{i=1}^N x_i\epsilon_i$ is asymptotically normal.

Hence, as $r,N\to\infty$, $X^\tran W\epsilon/N$ is normally distributed. Then, $\betahw$ is asymptotically Gaussian. Its mean and variance are computed from approximate formulas given in Ma et al. \cite{MMY15}, using the fact that as $r\to\infty$ and $N$ is fixed $\mathbf{w}$ converges in probability to $\mathbf{1}_N$. 

\subsection{Proof of Theorem \ref{thm:CIsigmaknown}}\label{sec:proofs:CIsigmaknown}

From \eqref{eq:betahwexp} and \eqref{eq:betahwvar}, we have that
\begin{align*}
1-\alpha &=\mathbb{P}(|(\betahw)_j-\beta_j|\leq z_{1-\alpha/2}\sigma\sqrt{((X^\tran WX)^{-1}X^\tran W^2X(X^\tran WX)^{-1})_{jj}}\mid W)
\end{align*}
Taking expectation of the above equation with respect to $W$, we have
\begin{align*}
1-\alpha &= \mathbb{P}(|(\betahw)_j-\beta_j|\leq z_{1-\alpha/2}\sigma\sqrt{((X^\tran WX)^{-1}X^\tran W^2X(X^\tran WX)^{-1})_{jj}})
\end{align*}
from which the result follows.

\subsection{Proof of Theorem \ref{thm:CIsigmaunknown}}\label{sec:proofs:CIsigmaunknown}

From Lemma \ref{lemma:wasconv}, as $r\to\infty$, $\mathbf{w} \to \mathbf{1}_N$ almost surely. Then, with probability $1$, $H_W \to H$. Therefore, as $r\to\infty$, 

\begin{align*}
(\betahw-\beta,(I-H_W)\epsilon) \overset{\mathcal{D}}{\to} (\betahols-\beta, (I-H)\epsilon)
\end{align*}

Considering only the $j$th predictor and scaling appropriately, we have

\begin{align*}
\left(\dfrac{(\betahw)_j-\beta_j}{\sqrt{((X^\tran WX)^{-1}X^\tran W^2X(X^\tran WX)^{-1})_{jj}}},\dfrac{(I-H_W)\epsilon}{\sqrt{N-p}}\right) \overset{\mathcal{D}}{\to} \left(\dfrac{(\betahols)_j-\beta_j}{\sqrt{((X^\tran X)^{-1})_{jj}}}, \dfrac{(I-H)\epsilon}{\sqrt{N-p}}\right)
\end{align*}

Applying the continuous mapping theorem, we have 

\begin{align*}
& \dfrac{(\betahw)_j-\beta_j}{\sqrt{\|(I-H_W)\epsilon\|^2/(N-p)}\sqrt{((X^\tran WX)^{-1}X^\tran W^2X(X^\tran WX)^{-1})_{jj}}} \\
& \overset{\mathcal{D}}{\to} \dfrac{(\betahols)_j-\beta_j}{\sqrt{\|(I-H)\epsilon\|^2/(N-p)}\sqrt{((X^\tran X)^{-1})_{jj}}} \\
& \sim t_{N-p}
\end{align*}
where the last line follows from \cite{W05} assuming that $X$ is full rank.

Therefore, as $r\to\infty$, $(\betahw)_j \pm t_{N-p,1-\alpha/2}\hat\sigma \sqrt{((X^\tran WX)^{-1}X^\tran W^2X(X^\tran WX)^{-1})_{jj}}$ has coverage probability $1-\alpha$. 

\subsection{ROC Curves}\label{sec:experiments:roc}

Here are the ROC curves for the simulations in the main paper.

\begin{figure}[ht]
	\centering
	\subfigure[$p=10$, $N=1000$]{%
		\includegraphics[width=0.45\textwidth]{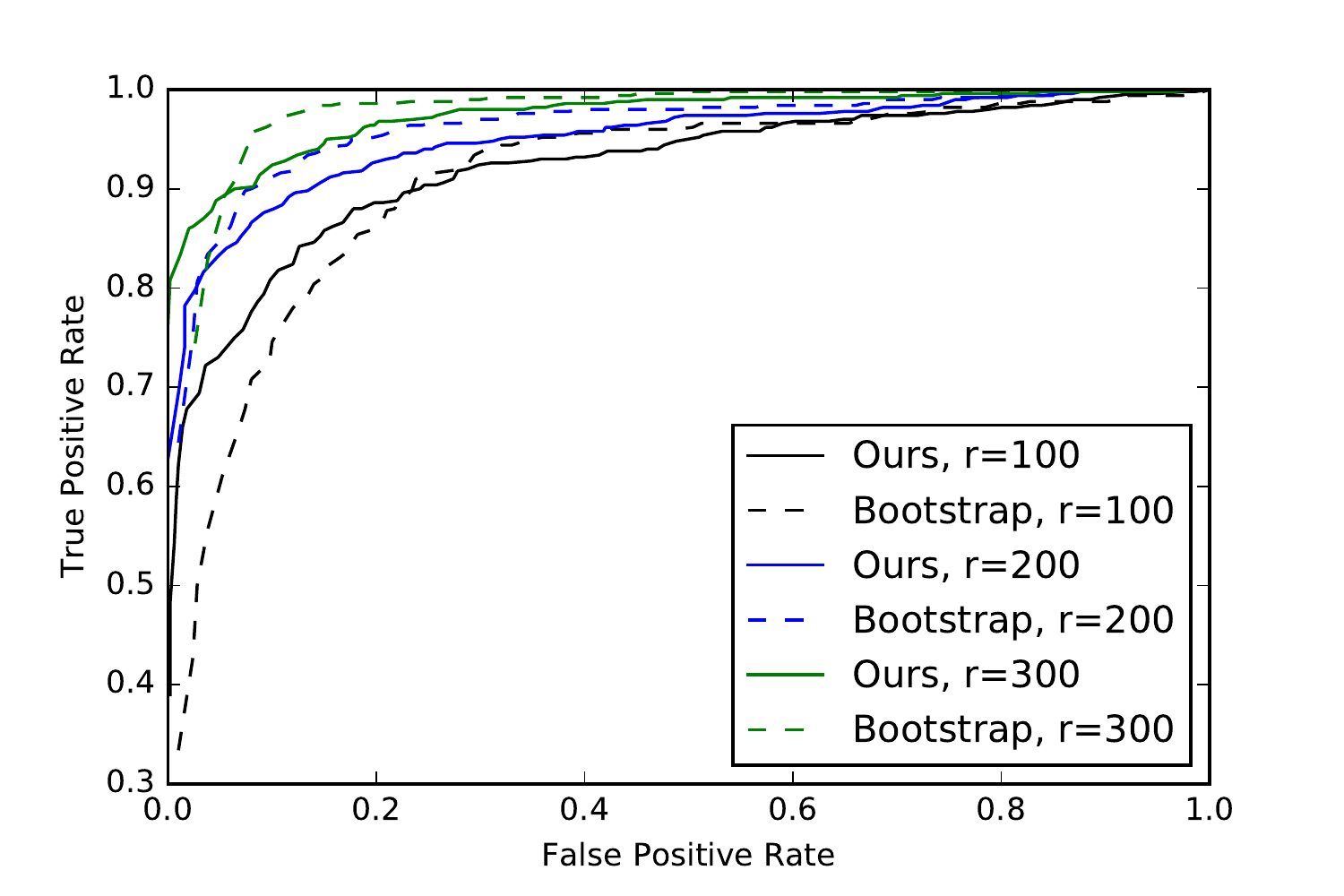}\label{fig:rocp10n1000}
	}
	\subfigure[$p=50$, $N=1000$]{%
		\includegraphics[width=0.45\textwidth]{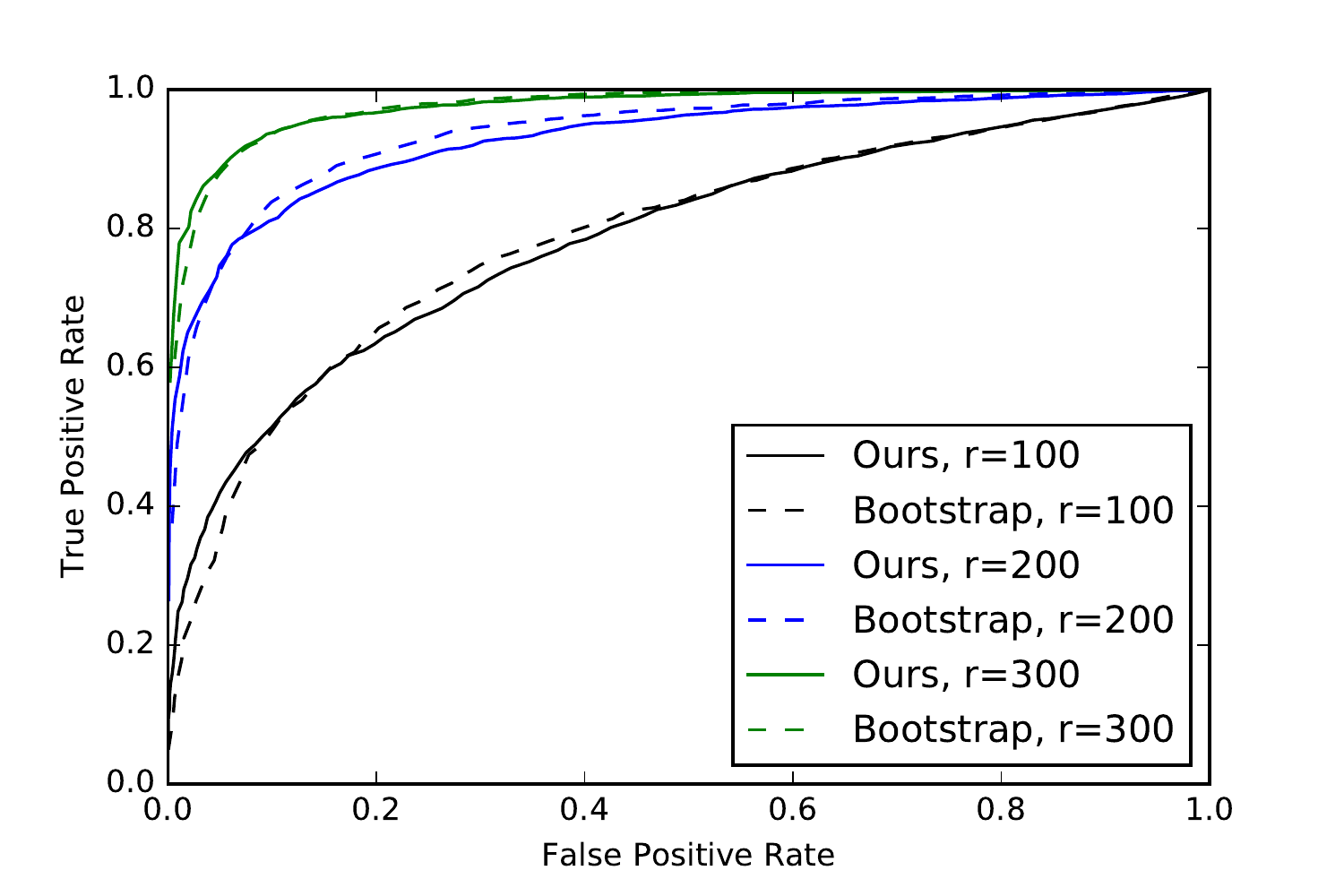}\label{fig:rocp50n1000}
	}
	\\
	\subfigure[$p=10$, $N=5000$]{%
		\includegraphics[width=0.45\textwidth]{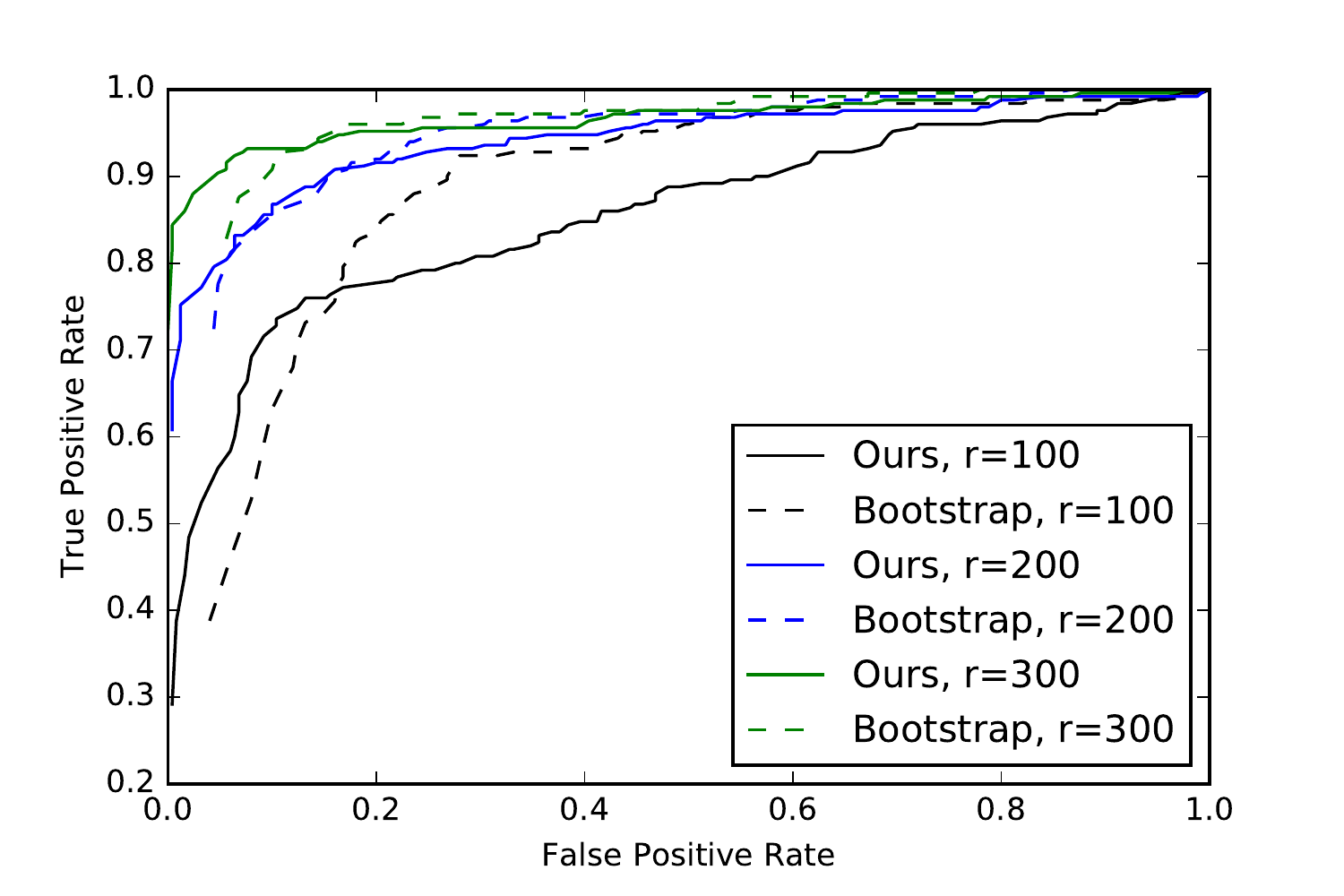}\label{fig:rocp10n5000}
	}
	\subfigure[$p=50$, $N=5000$]{%
		\includegraphics[width=0.45\textwidth]{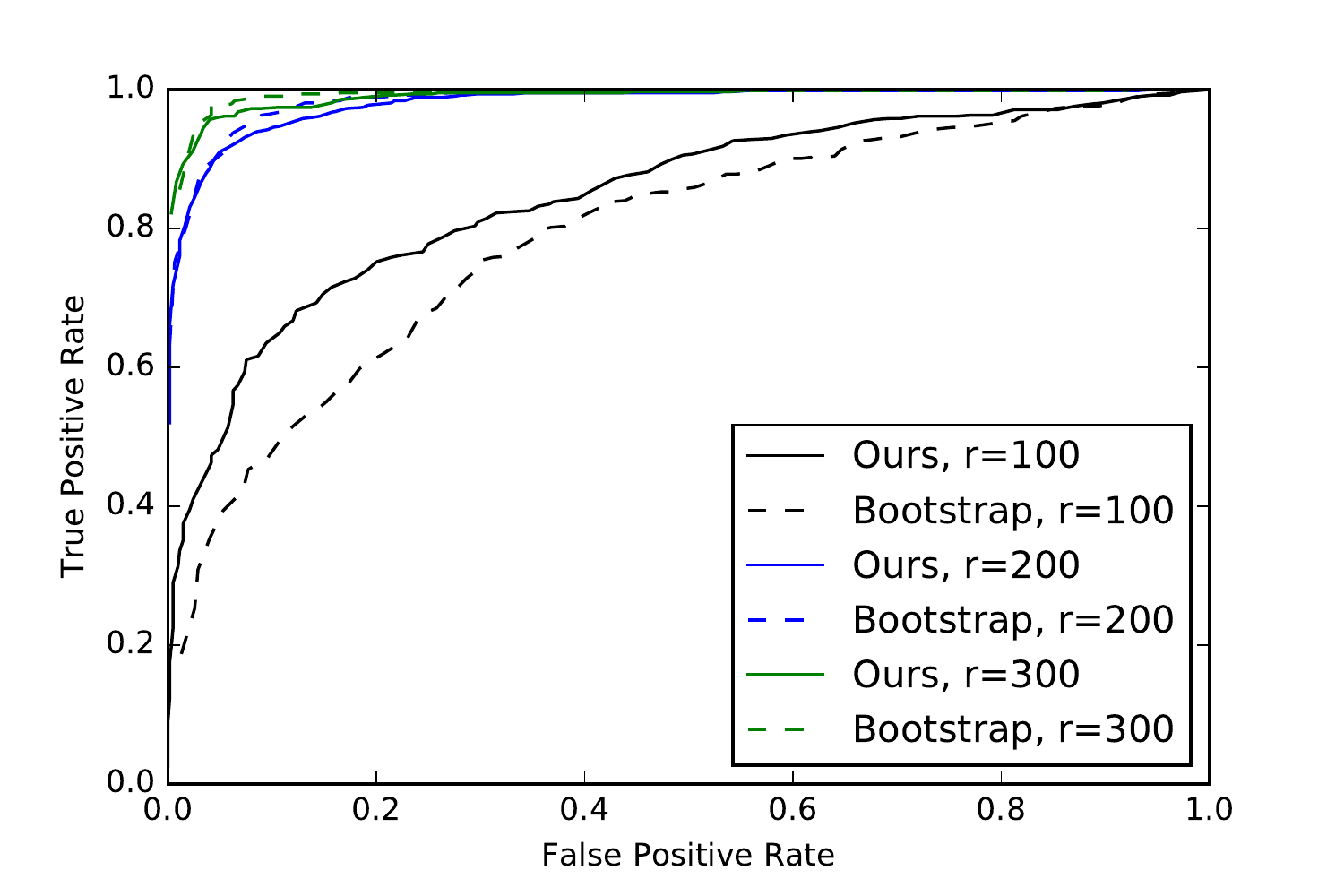}\label{fig:rocp50n5000}
	}
	\caption{ROC curves}
	\label{fig:roc}
\end{figure}

As expected, as $r$ or $N$ increase, the tests of significance improve in terms of the area under the ROC curve. However, there does not seem to be an appreciable difference between the areas under the ROC curve for our algorithm and the bootstrap. Our algorithm seems to have more power when the false positive rate is small and the bootstrap seems to have more power when the false positive rate is large.
Since in practice we would like to limit the false positive rate in order to avoid any costly actions, we believe that our algorithm is superior.

\end{document}